\renewcommand\footnotetextcopyrightpermission[1]{}
	\providecommand\BibTeX{{%
			\normalfont B\kern-0.5em{\scshape i\kern-0.25em b}\kern-0.8em\TeX}}}
\newcommand{\algstrut}[1][\algruledefaultfactor]{\vrule width 0pt
depth .25\baselineskip height #1\baselineskip\relax}
\newcommand*{\algrule}[1][\algorithmicindent]{\hspace*{.5em}\vrule\algstrut
\hspace*{\dimexpr#1-.5em}}
\def\ALG@printindent{%
    \ifnum \theALG@nested>0% is there anything to print
    \ifx\ALG@text\ALG@x@notext% is this an end group without any text?
    % do nothing
    \else
    \unskip
    % draw a rule for each indent level
    \ALG@printindent@tempcnta=1
    \loop
    \algrule[\csname ALG@ind@\the\ALG@printindent@tempcnta\endcsname]%
    \advance \ALG@printindent@tempcnta 1
    \ifnum \ALG@printindent@tempcnta<\numexpr\theALG@nested+1\relax% can't do <=, so add one to RHS and use < instead
    \repeat
    \fi
    \fi
}%
\patchcmd{\ALG@doentity}{\noindent\hskip\ALG@tlm}{\ALG@printindent}{}{\errmessage{failed to patch}}
\algnewcommand\StartFromLine[1]{\setcounter{ALG@line}{\numexpr#1-1}}
\newcommand{\algorithmicclass}{\textbf{Class}}
\renewcommand{\ALG@beginalgorithmic}{\small}
\patchcmd{\ALG@doentity}{\item[]\nointerlineskip}{}{}{}
\def\draft{0}  % set to 1 for draft (include comments, color edits), 0 for clean final copy (omit comments, edits in black)
\newcommand{\mycomment}[3]{{\color{#2}{[\textbf{#1: #3}]}}}
\newcommand{\myedit}[2]{{\color{#1}{#2}}\normalcolor}
\newcommand{\here}[1]{\textbf{[[[#1]]]}}
\newcommand{\future}[1]{[[To think about in future: #1]]}
\newcommand{\mycomment}[3]{}
\newcommand{\myedit}[2]{#2}
\newcommand{\here}[1]{}
\newcommand{\future}[1]{} % things to do in the future
\newcommand{\Eric}[1]{\mycomment{Eric}{blue}{#1}}
\newcommand{\Siddhartha}[1]{\mycomment{Siddhartha}{green}{#1}}
\newcommand{\Y}[1]{\mycomment{Youla}{brown}{#1 }}
\newcommand{\y}[1]{\myedit{brown}{#1 }}
\newcommand{\e}[1]{\myedit{blue}{#1 }}
\newcommand{\remove}[1]{}
\newcommand{\x}[1]{\mbox{\textit{#1}}} % for variable names
\newcommand{\op}[1]{\mbox{\textsc{#1}}} % for function names
\newcommand{\FetchAndAdd}[0]{\op{Fetch\&Add}}
\newcommand{\FetchAndAddDirect}[0]{\op{Fetch\&AddDirect}}
\newcommand{\Read}[0]{\op{Read}}
\newcommand{\CompareAndSwap}[0]{\op{Compare\&Swap}}
\newcommand{\FAA}[0]{\op{F\&A}}
\newcommand{\CAS}[0]{\op{CAS}}
\newcommand{\Agg}{\mbox{Aggregator}}
\newcommand{\Aggs}{\mbox{\Agg s}}
\newcommand{\floor}[1]{\left\lfloor #1 \right\rfloor}
\newtheorem{theorem}{Theorem}[section]
\newtheorem{lemma}[theorem]{Lemma}
\newtheorem{invariant}[theorem]{Invariant}
 \author{Younghun Roh}
 \affiliation{%
  	\institution{MIT CSAIL}
 	% \city{Cambridge}
 	% \state{MA}
 	\country{USA}
 }
 \email{yhunroh@mit.edu}
  \author{Yuanhao Wei}
 \affiliation{%
    \institution{MIT CSAIL}
 	% \city{Cambridge}
 	% \state{MA}
 	\country{USA}
 }
 \email{yuanhao1@mit.edu}
  \author{Eric Ruppert}
 \affiliation{%
    \institution{York University}
 	% \city{Toronto}
  %   \state{ON}
 	\country{Canada}
 }
 \email{eruppert@yorku.ca}
 \author{Panagiota Fatourou}
 \affiliation{%
    \institution{FORTH ICS and University of Crete}
 	% \city{Heraklion}
 	% \state{Crete}
    \country{Greece}
 }
 \email{faturu@cs.uoc.gr}
 \author{Siddhartha Jayanti}
 \affiliation{%
    \institution{Google Research}
 	% \city{Cambridge}
 	% \state{MA}
 	\country{USA}
 }
 \email{sjayanti@google.com}
 \author{Julian Shun}
 \affiliation{%
    \institution{MIT CSAIL}
 	% \city{Cambridge}
 	% \state{MA}
 	\country{USA}
 }
 \email{jshun@mit.edu}
\keywords{concurrency, contention reduction, fetch-and-add, queue, LCRQ}
\begin{document}

\copyrightyear{2025}
\acmYear{2025}
\setcopyright{cc}
\setcctype{by}
\acmConference[PPoPP '25]{The 30th ACM SIGPLAN Annual Symposium on
Principles and Practice of Parallel Programming}{March 1--5, 2025}{Las Vegas,
NV, USA}
\acmBooktitle{The 30th ACM SIGPLAN Annual Symposium on Principles and
Practice of Parallel Programming (PPoPP '25), March 1--5, 2025, Las Vegas, NV,
USA}
\acmDOI{10.1145/3710848.3710873}
\acmISBN{979-8-4007-1443-6/25/03}

\title{
Aggregating Funnels for Faster Fetch\&Add and Queues
}

% \title{Making Propagation Trees Practical via Delegation}

\begin{abstract} % version of abstract with comments can be found below
Many concurrent algorithms require processes to perform fetch-and-add operations on a single memory location, which can be a hot spot of contention.
We present a novel algorithm called {\em Aggregating Funnels} that reduces this contention by spreading the fetch-and-add operations across multiple memory locations. 
It aggregates fetch-and-add operations into batches so that the batch can be performed by a single hardware fetch-and-add instruction on one location and all operations in the batch can efficiently compute their results by performing a fetch-and-add instruction on a \emph{different} location.
We show experimentally that this approach achieves higher throughput than previous combining techniques, such as Combining Funnels,
and is substantially more scalable than applying hardware fetch-and-add instructions on a single memory location.
We show that replacing the fetch-and-add instructions in the fastest state-of-the-art concurrent queue by 
our Aggregating Funnels
eliminates a bottleneck and greatly improves the queue's overall throughput. 
\end{abstract}

	\maketitle 
	\section{Introduction}

Many concurrent algorithms use fetch-and-add  %, \y{a hardware {\em primitive} provided in the instruction set of most modern architectures,} 
to coordinate the actions of multiple processes. 
A \emph{fetch-and-add} on a memory location $X$ {\em atomically} adds a given value to $X$ and returns the value that was stored in $X$ before the addition.
Introduced by Gottlieb and Kruskal~\cite{GK81}, fetch-and-add is widely available as a hardware {\em primitive} \cite{X86}.
%this paper also introduces fetch-and-phi 
Applications often have hot spots of contention where many processes perform concurrent fetch-and-adds on the same location, degrading performance.  To mitigate this problem, we introduce Aggregating Funnels,  a software implementation of fetch-and-add that is much more scalable than the hardware primitive, and more efficient than state-of-the-art software implementations.
Throughout the paper, we use \FetchAndAdd\ to denote  software implementations and \FAA\ for the hardware primitive.
Since our implementation is linearizable \cite{HW90}, our \FetchAndAdd\ can be used in place
of \FAA\ in any application.

Scalable and efficient software replacements for hardware \FAA\ are crucial for obtaining high performance in many concurrent algorithms. 
Applications of \FAA\ include allocating memory addresses for objects of varying size % from a block of memory 
\cite{Sto82,EO88,Wil88},
%  There was also a tech report by Wilson and his
% supervisor Allan Gottlieb from April 1981, about memory management using fetch-and-add, but the PhD 
% thesis seems like a more accessible and thorough citation; chapter 4 is the relevant section.
solving the readers-writers problem \cite{GLR83}, and
%designing efficient 
wait-free universal constructions~\cite{FK14}.
%\here{could add more if we think of any \cite{TODO}}.
\FAA\ can be used to implement simpler primitives, such as 
\op{Fetch\&Inc}---which simply amounts to performing \FAA(1)---and {\em counters}, which
support the operations \op{Add}(\x{val})  and \op{Read} (via \FAA(\x{val}) and \FAA(0), respectively).
These primitives themselves have a plethora of applications.
For example, \op{Fetch\&Inc} is used in
assigning distinct identifiers to processes \cite{MA95},
reference counting for garbage collection in concurrent systems \cite{Val95,MS95,Sun05}, % These are earliest works I could find--Val95+MS95 are for a specific application; Sun05 is a more general scheme
assigning distinct tickets in a ticket lock \cite{FLBB79,RK79,MS91},
% The papers FLBB79,RK79 have the idea, but MS91 formalizes it using fetch&inc; I referred to chapter 4.2 of Scott and Brown's Shared Memory Syncrhonization (2nd ed) for history of this
assigning distinct timestamps to operations \cite{BJW23},  
implementing simple barriers~\cite[Chapter 18.2--18.3]{HSLS21}, 
array-based queue   locks~\cite{A90} (see also~\cite[Chapter 7.5.1]{HSLS21}),
highly-efficient concurrent data structures, such as queues~\cite{GLR83,FG91,MA13,FK14,R19,FGM24} and stacks~\cite{PZ18},
and in many other applications~\cite{FM+90,FK83,KMZ84,Pet82}.
%The applications of {\em counters} include \here{TODO:X, TODO:Y, and TODO:Z}.

Previous work ~\cite{HSLS21, FK12, SZ00} provided
implementations of \FetchAndAdd\ that alleviate the bottleneck of multiple processes simultaneously performing \FetchAndAdd\  on a single memory location. They use \emph{software combining} to diffuse the contention.
%At a high level, 
Active \FetchAndAdd\ operations coordinate on low-contention ancillary variables, combine their operations, and choose a delegate, 
so that only the delegates contend for the main variable.
The delegate then reports its return value to the \FetchAndAdd\ operations waiting on it and the waiting operations use this to calculate their own return values.
This combining process ensures that both the ancillary and main variables have low contention.
This line of work culminated in a technique called \emph{Combining Funnels}~\cite{SZ00}, which filters \FetchAndAdd\ operations through several levels of objects called funnels, %\julian{should we define what a "funnel" is?}\Eric{I added "objects called", and I think the next phrase gives at least some idea of what funnels accomplish}, 
combining them pair-wise at each funnel using swap and compare-and-swap primitives.

While these existing approaches  reduce contention on each variable access, they considerably increase the number of variables accessed by a \FetchAndAdd.
The additional cost of these accesses outweighs the benefits when there are few concurrent \FetchAndAdd\ operations.
% This is beneficial when the number of contending \FetchAndAdd\ operations is high, but potentially harmful when they are low.
Indeed, we see in our experiments that using Combining Funnels is significantly slower than hardware \FAA\ on low thread counts and only slightly outperforms  \FAA\ after 100 threads in \FetchAndAdd-heavy workloads (see ~\Cref{fig:count_thr,fig:count_100} in \Cref{sec:experiments}).

We present \emph{Aggregating Funnels}, a novel way to implement \FetchAndAdd\ that 
% uses a more efficient combining technique to 
significantly reduces contention  while only slightly increasing the number of variables accessed.
% \Hao{to combine more in less steps}
\e{We also use multiple levels, with $k$-way combining of \FetchAndAdd\ operations at each level that is efficient for up to $k=25$ in our experiments.}
This lets us combine more at each level, and use far fewer levels than the Combining Funnels approach. 
In fact, using just \emph{one} level of Aggregating Funnels yielded the best performance in our experiments.

% \Hao{should we mention that we bring it down to a single level? Multiple levels might be better once thread counts get higher}
A technique for $k$-way combining was previously proposed by Tang and Yew~\cite{TY90}, but their algorithm is more complex and uses primitives not available on modern machines, such as \op{Fetch\&Add\&Store}, which atomically performs \FetchAndAdd\ on one memory location and \op{Store} on an adjacent memory location.
% an atomic \FAA\ and store on two adjacent memory locations, and conditional \FAA\ \footnote{A conditional \FAA\ atomically reads a memory location and applies the \FAA\ only if the value read satisfies some condition.}.
In contrast, our algorithm uses only the widely available \op{Load}, \op{Store}, and \FAA\ primitives.

% \Hao{Split into two sentences, we leverage FAA when combining}

We call the mechanism used to achieve $k$-way combining in our algorithm  an \emph{aggregator}.
% \Hao{We can reword this sentence to avoid using the word combiner. I can't find a good wording right now.}
Aggregators achieve fast combining by having each thread register itself in a batch using a single \FAA\ instruction.
This \FAA\ contends only with other threads accessing the same aggregator and it serves multiple purposes.
It is used to (1) decide the delegate for each batch, (2) sum all of the operations within the batch, (3) determine when the batch is closed, and (4) help determine the return value of the \FetchAndAdd.
% This sum is then propagated to the next level by the delegate.
Previous combining techniques~\cite{HSLS21, FK12, SZ00} use several variables to coordinate these tasks.
%In our algorithm, the non-delegate threads use the result of this \FAA\ instruction as an offset for calculating their own return values.
% For all the other threads, the \FAA\
Accomplishing all these tasks with a single \FAA\ per operation is one reason for the increased efficiency of our combining approach.
% \Siddhartha{Maybe we need to say a little more to make the previous sentence meaningful.}
Our experiments show that Aggregating Funnels start outperforming hardware \FAA\ for as few as 30 threads and are up to 4x faster than both Combining Funnels and hardware \FAA\ at high thread counts.

We prove that our \FetchAndAdd\ implementation is \emph{strongly linearizable} \cite{GHW11}, 
making it suitable for deployment even in randomized concurrent algorithms.
It is \emph{blocking} because combined operations must wait for the delegate to bring back a return value from the main variable, but our experiments show  this does not lead to uneven performance of threads.  In fact, our
implementation provides greater fairness (i.e., it results in more similar throughputs at different threads) than using hardware \FAA\ directly.
We also provide a \op{Fetch\&AddDirect} operation which can be used by higher priority threads to skip the combining step and go directly to the main variable.
Our experiments show that this direct option significantly increases the \FetchAndAdd\ throughput of the high priority threads without reducing the overall throughput.
%The implementation works in a fully asynchronous system, but we assume no failures.
%\Siddhartha{Do we need to bring up failures? All blocking implementations assume no failures, right?}
%\Hao{I agree that this sentence may not be needed.}
%\Y{I removed the sentence to save space, as I agreed with both of you that it is unnecessary.}
Our implementation is \emph{RMWable} \cite{JJJ24}, meaning that it 
also supports any other operation that is provided as an atomic primitive.  
For example, if the hardware provides a compare-and-swap instruction,
then a \op{Compare\&Swap} can also be 
supported by our fetch-and-add object.

% \Hao{change Aggregating and Combining Funnels to capitals}
% (Figure~\ref{fig:count_main}).
\noindent{\bf Scaling up concurrent queues.} We believe plugging Aggregating Funnels into many applications that use \FAA\ 
will make them more scalable.
As evidence, we use them to implement the highly contended fetch-and-add objects in the concurrent queue LCRQ, published in PPoPP 2013~\cite{MA13} and recently shown to still be the
fastest concurrent queue~\cite{RK23}.
Aggregating Funnels eliminate the scalability bottleneck in LCRQ,  improving throughput by up to 2.5x for high thread counts.
This is a significant leap forward in concurrent queue efficiency.
% resulting in the fastest concurrent queue implementation known to us.
This speed-up also highlights how the significant efficiency
gains observed in the microbenchmarks indeed translate into
better performing higher-level applications.

\noindent
{\bf Our Contributions.} We summarize our main contributions.
\begin{itemize}[left= 0pt, labelwidth=7pt, itemindent=11pt,topsep=0pt,partopsep=0pt]
    \item 
    We design the {\em Aggregating Funnels} algorithm, which uses hardware \FAA\ instructions to implement \FetchAndAdd\ operations with greatly reduced contention on individual memory locations.  It also provides greater fairness to threads.

    \item
    We %implement various versions of our Aggregating Funnels, and empirically 
    show that Aggregating Funnels provide more scalable \FetchAndAdd\ operations than hardware \FAA\ and the state-of-the-art Combining Funnel algorithm \cite{SZ00}.

    \item        
    Our experiments show that replacing hardware \FAA\ with our Aggregating Funnels makes the fastest available concurrent queue, LCRQ \cite{MA13}, significantly faster and more scalable.

    \item 
    Our implementation is \emph{strongly linearizable} \cite{GHW11}, making it suitable as a replacement for hardware \FetchAndAdd\  in both deterministic and randomized algorithms.
    
    \item 
    Our implementation supports all hardware primitives.
\end{itemize}

	\section{Related Work}
\label{sec:related-work}

\remove{
A concurrent \emph{counter} $V$ stores an integer and supports atomic operations \op{Add}($v$), which adds $v$ to $V$ and returns ack, 
and \op{Read}() which returns the current value of $V$. 
\FetchAndAdd trivially implements a counter using \FetchAndAdd(0) for \op{Read}. 
Similarly, \FetchAndAdd\ implements a timestamp object supporting  operations \op{GetTimestamp}(), 
which returns a unique timestamp, and \op{Compare}($\x{ts}, \x{ts}â€™$), which returns \textsc{True} 
if \x{ts} was obtained earlier than $\x{ts}â€™$ and \textsc{False} otherwise. 
\FetchAndAdd, with consensus number 2~\cite{H91}, is stronger than counters and timestamps, which have consensus number 1. 
%Thus, \FetchAndAdd cannot be imlemented in a wait-free manner by these objects (and read/write registers).
}

Many papers have focused on designing practical implementations of \FetchAndAdd. 
%They all aim to decompose interactions among threads into pieces that can be performed concurrently, 
%thus avoiding bottlenecks, reducing memory contention, and increasing throughput. 
%\Eric{I find this wording a little unclear (how do they decompose interactions? what would "pieces" of interactions mean?) but I'm not sure how to improve it.-- Youla says this is already covered in intro so maybe can be removed}
%
One approach uses a complete tree of height $\Theta(\log p)$,
%(which could be implemented as an array), 
assigning a leaf to each of the $p$ threads. 
Each tree node stores some metadata (including a counter). 
To execute a \FetchAndAdd($\textit{diff}$), a thread first increments the counter of its leaf by $\textit{diff}$. 
Then, it works its way up the tree to the root, updating all nodes of the path it traverses.
%When the root is updated to reflect $op$'s change, the operation takes effect.  
%The current value of the counter can be found by simply reading the tree's root.
%
Combining trees~\cite[Section 12.2]{HSLS21} (originally proposed in~\cite{GVW89,YTL87}), 
employ combining at each node of the tree to reduce contention. 
%Whenever more than one thread reaches a node at about the same time, 
%they combine their operations by summing up their values. 
Every tree node contains a lock. Threads compete for this lock to ascend from a node to its parent, 
and only the winning thread proceeds to the next level up the tree.
%Thus, at most two threads visit a node concurrently. 
The other thread waits for the winner to apply its operation. 

Combining trees were criticized as having performance that is sensitive to changes in the arrival 
rate of operations~\cite{HBS95,SZ96,SZ00}: whenever only a subset of the threads is concurrently active, 
little combining occurs, yet threads must still pay the cost of going through all $\Theta(\log p)$ levels
to reach the root. Combining Funnels~\cite{SZ00} address this by replacing the static tree with a 
series of combining layers, through which \FetchAndAdd\ operations are passed. 
In each layer, threads meet for combining by randomly choosing a location in an array at which to wait for other threads.
By using an adaptive scheme, the funnel can change its width and depth to accommodate dynamic access patterns. 
Combining funnels have been experimentally shown~\cite{SZ00} to outperform all schemes discussed in this section
that aim to provide low-contention implementations of \FetchAndAdd.
%Add a sentence to explain why our scheme performs better than combining funnels.

Counting networks~\cite{AHS94,M00} can be used to count concurrently and asynchronously.
%, in a way similar to that sorting networks~\cite{B68} sort albeit in a synchronous manner.  <-- deleted as off-topic
They are constructed from simple two-input, two-output computing elements called {\em balancers}, 
connected to one another by wires. Threads traverse different paths through the network and obtain 
a return value when they reach an output wire of the network. 
Generalizations of counting networks can be used to implement \FetchAndAdd~\cite{FH04}.
%\julian{\FAA?}\Eric{I would say \FetchAndAdd\ is more appropriate when it is implemented using counting networks}.
%In~\cite{AHS94}, two counting networks are provided where the maximum length of any path is $O(\log p)$. 
A drawback of %counting networks (and more generally, of Read-Modify-Write networks~\cite{FH04}) 
this approach is that 
linearizability cannot be supported without paying a significant cost or sacrificing other desirable properties~\cite{HSW96, FH04}. 
Diffracting trees~\cite{SZ96} employ some form of tree-like counting networks and attempt to reduce contention using arrays ({\em prisms}) where threads attempt to meet and combine. They achieve better performance than the counting networks in~\cite{AHS94}, 
but  are not linearizable.  
Counting networks and diffracting trees, even when they are used to implement \op{Fetch\&Inc},  are less efficient than combining funnels \cite{SZ00}.

Other work aims to efficiently implement objects (including \FetchAndAdd) from \emph{other} primitives,
such as \CAS\ and LL/SC.
Jayanti \cite{Jay02} used a technique known as {\em double refresh} (originally proposed in~\cite{ADT95}), to build $f$-arrays, where $f$ is a fixed function 
over the elements of an array $A[1,\ldots,n]$; each thread $i$ can update $A[i]$ and query the value of $f$. 
An $f$-array can be used to obtain a wait-free implementation of a concurrent counter (supporting \op{Add} and \op{Read}), whose step complexity is $O(\log p)$. 
%using CAS objects of size $O(\log m)$, where $m$ is the number of \Add\ operations performed. 
%The \FetchAndAdd object can be read in O(1) time by simply reading the root. 
Ellen and Woelfel~\cite{EW13} presented a wait-free, linearizable implementation of \FetchAndAdd\ with $O(\log p)$ worst-case step complexity from $O(\log m)$-bit registers and LL/SC objects for up to $m$ \FetchAndAdd\ operations.
These papers %do not aim at reducing contention and they all 
use the standard measure of step complexity, which simply counts the maximum number of accesses to shared variables that
an operation performs, without taking into account the contention that these accesses may cause.

Jayanti~\cite{J98} proved an $\Omega(\log p)$ lower bound on the expected step complexity of any randomized wait-free, linearizable implementation of a single-shot \op{Fetch\&Inc} object from \op{LL/SC} objects (for a strong adaptive adversary).  
%, and Jayanti, Tarjan, and Boix-Adser\`{a} showed that 
This bound also holds for long-lived objects, even with amortization \cite{JTB19, Jay22}.
Randomized implementations of \op{Fetch\&Inc} with polylogarithmic step complexity are known~\cite{AA+11}. 
%The universal construction in~\cite{ADT95} can be modified~\cite{J98} to have $O(\log p)$ 
%step complexity but requires shared variables of $\Omega(p \log p)$ bits. 

\remove{
Ellen et al. (theoretical) algorithm for $O(\log p)$ implementation of F\&Inc using LL/SC: \url{https://link.springer.com/chapter/10.1007/978-3-642-33651-5_2}
Chandra et al. (theoretical) algorithm for $O(\log^2 p)$ implementation of F\&A using LL/SC: \url{https://dl.acm.org/doi/pdf/10.1145/277697.277753}
Also look at \cite{TY90}}

\noindent
{\bf Concurrent Queues.}
\label{related-queue}
State-of-the-art concurrent queues employ \FAA~\cite{MA13,R19,RK23}. 
There is empirical evidence \cite{RK23} that 
LCRQ~\cite{MA13} has the best performance. 
LCRQ is inspired by the simple idea of using an infinite array, $Q$, 
and two indices \x{Tail} and \x{Head} that are updated using \op{Fetch\&Inc}.
Initially, all elements of $Q$
contain $\bot$, and $\x{Head}=\x{Tail}=0$. 
%\Tail\ is used
%by the enqueuers to find the position of the array in which they will insert the new item.
%Similarly, \Head\ is used by the dequeuers 
%to find the position of the array from which they will remove an item.
\e{To enqueue an item, a thread repeatedly performs a \op{Fetch\&Inc} on \x{Tail} to get an index $i$ and swaps the item into $Q[i]$
using a \op{Fetch\&Store} until it replaces a $\bot$ with its item. 
A dequeue repeatedly executes \op{Fetch\&Inc} on \x{Head} to obtain an index $i$
and tries swap $\top$ into $Q[i]$ until one such swap returns a non-$\bot$ item, which it can return 
(or until it detects that the queue is empty). 
%\julian{should this be $\bot$? if not, it may be helpful to explain the difference between $\top$ and $\bot$}
This way, each element of $Q$
is accessed by at most one enqueue and one dequeue. 
An enqueue whose swap into $Q[i]$ returns $\top$
knows that a dequeue has already accessed $Q[i]$, so the enqueue continues trying to swap its item into other locations.}
To bound space usage, LCRQ uses a linked list of circular arrays in place of the infinite array $Q$.
%in a way similar to the lock-free queue 
%implementation of Micahel and 
%Scott~\cite{MSQUEUE96}.

LCRQ is lock-free but uses double-word \CAS. 
LPRQ~\cite{RK23} is a variant of LCRQ that uses single-word \CAS, but it does not outperform
LCRQ in empirical tests.
Recent work~\cite{YMC16,NR22} provides wait-free 
concurrent queues based on \FAA, but they
also do not outperform LCRQ in experimental analyses.

        \section{Aggregating Funnels Algorithm}
\label{sec:algorithms}

We present a linearizable implementation of a fetch-and-add object $O$ that stores an integer and supports the operations
\FetchAndAdd(\x{df}), which adds \x{df} to the value of $O$ and returns its previous value,
and \Read, which returns the value of $O$.
The implementation uses atomic \op{Read}, \op{Write} and \FAA\ as primitives (i.e., hardware instructions).
Any other %read-modify-write 
primitives, such as compare-and-swap, that are supported
by hardware are also supported by $O$.

Our implementation uses a principal variable \x{Main}, which stores the actual value of $O$,
and $2m$ ancillary objects called \Aggs, which aggregate batches of concurrent\linebreak \FetchAndAdd\ operations.
%In the face of contention, 
One operation from each batch is chosen 
to apply a single \FAA(\x{sum}) on \x{Main}, where \x{sum}
is the sum of the batch's arguments. That operation is called the {\em delegate} of the batch.
Thus, an \Agg\ acts as a funnel to narrow a stream of operations:  many operations may arrive at the \Agg\ concurrently, but only one operation at a time proceeds to access \x{Main}.
The goal %of this bootstrapping construction 
is to limit contention by spreading out \FAA\ primitives
across $2m+1$ memory locations (\x{Main} and $2m$ \Aggs) 
instead of having all operations %on $O$ 
perform a \FAA\ on the same location.
\Eric{We sometimes used ``a \FAA'' and sometimes ``an \FAA''.  In my head, I pronounce \FAA\ as ``fetch and add'' rather than ``eff and a'', so I changed to ``a \FAA''.}

A \FetchAndAdd(\x{df}) operation first chooses one of the $2m$ Aggregators:
$m$ \Aggs\ are used for \FetchAndAdd\ operations with positive arguments, and the other
$m$ are used for negative arguments (we will later discuss possible ways to choose an Aggregator).
It applies a \FAA(\x{df}) to the \x{value} field of its chosen \Agg.
Thus, \x{value} stores the sum of the arguments of \FetchAndAdd\ operations that have been applied to the \Agg.
Each \Agg\ also stores additional information, described below,
to help operations on $O$ compute the results that they should return.
Our implementation works for any value of $m$.  
Thus, in practice, we choose a value of $m$ to optimize performance.

\subsection{Detailed Description}

\colorlet{myblue}{cyan!60!black} % I changed this colour to make it show up better when printed
\newcommand{\threshold}{\x{Threshold}}

\begin{algorithm*}
\begin{algorithmic}[1]
\Class{\Agg}\Comment{used to aggregate batches of \FetchAndAdd\ operations}
	\State unsigned int64 \x{value} \Comment{sum of values added at this \Agg}
%    \State int \x{sign} \Comment{$+1$ or $-1$ if \Agg\ handles positive or negative changes}
	\State Batch* \x{last} \Comment{last Batch in \Agg's list}%
\begingroup \color{myblue}
%   \State \Agg * \x{nextAgg} \Comment{next \Agg\ in case of overflow}
	\State unsigned int64 \x{final} \Comment{value after final batch, or $\infty$ if \Agg\ is still in use}%
\endgroup
\EndClass	

\medskip

\Class{Batch} \Comment{represents a batch of operations on an \Agg}
	\State unsigned int64 \x{before} \Comment{\Agg's \x{value} before the batch}
	\State unsigned int64 \x{after} \Comment{\Agg's \x{value} after the batch}
	\State unsigned int64 \x{mainBefore} \Comment{value of \x{Main} before the batch}
	\State Batch* \x{previous} \Comment{pointer to previous Batch of operations on the \Agg}
\EndClass

\medskip

\State Shared variables:
\State int $\x{Main} \leftarrow 0$
\State \Agg* \x{Agg}$[0,\ldots,2m-1]$ \Comment{first $m$ for positive arguments and the rest for negative ones}%
\begingroup\color{myblue} \State unsigned int64 $\threshold \leftarrow 2^{63}$ \label{rep-max} \Comment{when an \Agg's \x{value} exceeds this, it is retired}%
\endgroup

\For{$i\leftarrow 0,\ldots,2m-1$} \Comment{initialize \x{Agg} array}
    \State $\x{Agg}[i]\leftarrow \mbox{new \Agg}(0,\mbox{new Batch}(0,0,0,\perp)\begingroup\color{myblue},\infty\endgroup)$
\EndFor
\medskip

\Function{int}{Read(\hphantom{$\cdot$})}{} \Comment{read value directly from \x{Main}}
	\State \Return{\x{Main}}
\EndFunction

\medskip

\Function{int}{Fetch\&Add}{int \x{df}}
        \If{$\x{df} = 0$}
            \Return \op{Read}()\label{FAA0}
        \EndIf
        \State int $index \leftarrow $ \op{Choose\Agg}(\x{df}) \Comment{\x{index} should be in $0,\ldots,m-1$ if and only if $\x{df}>0$}\label{chooseAgg}
        \State \Agg \ $\x{a} \leftarrow \x{Agg}[index]$    \label{read-agg}
%\color{myblue}        \While{$\x{a}.nextAgg \neq \perp$}    \label{traverse-list}
%\color{myblue}            \State $\x{a} \leftarrow \x{a}.nextAgg$  \label{move-next}
%        \EndWhile
\color{black} 	\State unsigned int64 $\x{aBefore} \leftarrow \FAA(a.\x{value}, |\x{df}|)$  \Comment{apply operation to \Agg\ $a$}\label{FAAagg}
        \While{$a.\x{last}.\x{after} < \x{aBefore}$ \begingroup \color{myblue} $ \vee \x{aBefore}\geq a.\x{final}$\endgroup} \label{wait}\Comment{wait until my batch has been or can be added to $a$'s list}%
\begingroup \color{myblue}
            \If{$\x{aBefore}\geq a.\x{final}$}\label{if-overflow} \textbf{go to} line \ref{read-agg}\Comment{\Agg\ $a$ overflowed; restart in the current \Agg}%
%                \State $\x{a} \leftarrow \x{Agg}[\x{index}]$\label{switchAgg}
%                \State $\x{aBefore} \leftarrow \FAA(a.\x{value}, |\x{df}|)$ \label{FAAagain}
            \EndIf%
\endgroup
        \EndWhile
        
        \State Batch* $\x{batch} \leftarrow a.\x{last}$\label{read-last}
        \If{$\x{batch}.\x{after} = \x{aBefore}$}\label{delegatedTest}\Comment{if operation the first in its batch, it is the batch's delegate}
		      \State unsigned int64 $\x{aAfter} \leftarrow a.\x{value}$\label{readValue}\Comment{get \Agg\ $a$'s \x{value} at the end of the batch of operations}
            \State unsigned int64 $\x{mainBefore} \leftarrow \FAA(\x{Main}, (\x{aAfter}-\x{aBefore}) \cdot \mbox{sgn}(\x{df}))$ \label{FAAMain} \Comment{apply batch of operations on \x{Main}}%
\begingroup\color{myblue}
            \If{$\x{aAfter} \geq \threshold$} \Comment {this is last batch on \Agg\ $a$}\label{overflow-check}
                \State $\x{Agg}[index] \leftarrow \mbox{new \Agg}(0, \mbox{new Batch}(0,0,0,\perp), \infty)$ \label{newAgg}\Comment{retire $a$ and replace it with a new \Agg}
%               \State $a.nextAgg\leftarrow \x{Agg}[index]$\label{update-agg-list}
                \State $a.\x{final} \leftarrow \x{aAfter}$ \Comment{ensure no more batches are performed to $a$}\label{setFinal}%
            \EndIf%
\endgroup
            \State $a.\x{last} \leftarrow \mbox{new Batch}(\x{aBefore}, \x{aAfter}, \x{mainBefore}, \x{batch})$ \label{newBatch} \Comment{create a new Batch and add it to $a$'s list}
            \State \Return{\x{mainBefore}}\label{return1}
        \Else \Comment{this operation is in a Batch already added to $a$'s list}
		      \While{$\x{batch}.\x{before} > \x{aBefore}$}\Comment{find batch with $\x{batch}.\x{before} \leq \x{aBefore} < \x{batch}.\x{after}$}\label{search-begin}
                \State $\x{batch} \leftarrow \x{batch}.\x{previous}$
            \EndWhile\label{search-end}
            \State \Return $\x{batch}.\x{mainBefore} + (\x{aBefore} - \x{batch}.\x{before})\cdot \mbox{sgn}(\x{df})$ \Comment{compute result to return}\label{return2}%
	\EndIf
\EndFunction
\medskip

\Function{int}{Fetch\&AddDirect}{int \x{df}} \Comment{apply \FetchAndAdd\ directly to \x{Main}}
	\State \Return{\FAA(\x{Main}, \x{df})}\label{direct}
\EndFunction

\medskip

\Function{int}{Compare\&Swap}{int \x{old}, int \x{new}}\Comment{any other available primitive can be applied similarly to \x{Main}}
	\State \Return{\op{CAS}(\x{Main}, \x{old}, \x{new})} \Comment{use hardware \op{CAS} directly on \x{Main} }
\EndFunction
\caption{Aggregating Funnels:  a strongly linearizable Fetch\&Add implementation. %\julian{should we use the same font for CAS as for other operations?}\Eric{I think it is.}
}
\label{code-overflow}
\end{algorithmic}
\end{algorithm*}

\begin{algorithm}
\begin{algorithmic}[1]
\StartFromLine{43}
\Function {int} {Choose\Agg} {int \x{df}}
    \State int $\x{g} \leftarrow \left\lfloor \x{threadIdx} / \sqrt{p} \right\rfloor $
    \If{$\x{df} > 0$}
        \Return $g$
    \Else \xspace
        \Return $m+g$
    \EndIf
\EndFunction
\end{algorithmic}
\caption{One possible implementation of the \op{Choose\Agg} function for $p$ threads using $m=\floor{\sqrt{p}}$ \Aggs\ for each sign.}\label{evenSpread}
\end{algorithm}

\Cref{code-overflow} gives pseudocode for our implementation. 
Shared variable names are capitalized; thread-local variables are not.
We use the notation sgn($x$) for the signum function that returns $1$ if $x>0$, or $-1$ if $x<0$.
% \Eric{We could add a sentence here if Youla wants:  When $x$ is a pointer, we use the
% notation $x.f$ to denote the field $f$ of the object that $x$ points to.}

%For ease of presentation, 
We first focus on the
black part of the code. Code in \begingroup\color{myblue}cyan\endgroup\ copes with the (rarer) case of an overflow
on an \Agg\ and it is discussed in \Cref{overflow-discussion}.  We also focus on\linebreak \FetchAndAdd\ operations
with positive arguments first.

%We first describe a \FetchAndAdd(\x{df})
%operation with a positive argument \x{df} on $O$.
A \FetchAndAdd(\x{df}) operation first chooses an \Agg\ $A$ (line~\ref{chooseAgg}) 
from among the $m$ aggregators for the sign of \x{df}.
It then applies a \FAA(\x{df}) primitive
to $A.\x{value}$ (line \ref{FAAagg}).
To reduce contention, several concurrent operations that chose $A$
may be combined into a batch.
The \emph{first} operation of the batch to perform its \FAA\
on $A.\x{value}$ is selected as the batch's \emph{delegate}.
Only the delegate proceeds to access \x{Main}, where it
performs
a \FAA\ that
adds the sum of the batch's arguments to \x{Main}.
To provide a linearization for $O$,
we linearize the entire batch of operations at the batch's \FAA\ 
 on \x{Main}.
Thus, we maintain an invariant that \x{Main} holds 
the value that $O$ would have if all operations linearized so far were performed on it.
Operations within a batch are linearized in the order they performed
their \FAA\ on $A.{value}$.

Only the first operation in each batch at $A$ gets the result of the \FAA\
on $\x{Main}$, so that delegate operation must share this information with the rest of the batch's operations, so that each can figure out the result  it should return.
To facilitate this, \y{$A$ stores a singly-linked list of \emph{Batch objects},}
one for each batch of operations from $A$
that has been applied to \x{Main}. 
$A.\x{last}$ points to
the Batch object 
for the most recent batch of operations applied to $A$.
Each Batch object has a pointer to the previous Batch
and several other pieces of information:  the fields \x{before} and \x{after}
store $A$'s \x{value} before and after the batch of operations is applied to~$A$,
and \x{mainBefore} stores the value of \x{Main} just before the batch
of operations is applied to \x{Main}. Each field of a Batch is immutable.

The most recent Batch $B$ in $A$'s list of Batch objects is used to determine which \FetchAndAdd\ 
operation is the first in $A$'s next batch $B'$.
This delegate operation for $B'$ is the operation whose \FAA\ on $A.\x{value}$ returns the value $A.\x{last.after}$,
i.e., the value that $A$ had after $B$ has been applied to $A$. 
After getting the result \x{aBefore} from its \FAA\ on $A.\x{value}$, an operation \x{op}
checks whether it should wait on line~\ref{wait}. 
The delegate operation $op_{del}'$ for $B'$ will be the only operation among the operations of $B'$
that will eventually evaluate the condition of line~\ref{wait} to true (equality will hold) when $B$ is added to $A$'s list, thus causing $op_{del}'$ to exit the wait loop.
Each of the other operations %executing $B$'s operations 
will wait until $A$'s list contains a Batch object 
whose \x{after} field is greater than the value stored in the operation's \x{aBefore} variable.
Such a Batch  will be added to the list by a delegate operation, as we describe below. 

The delegate operation $op_{del}$ of a Batch $B$ executes lines~\ref{readValue}--\ref{return1}. 
Its read of $A.\x{value}$ on line~\ref{readValue} determines the end of Batch~$B$:
$B$~contains all operations that perform their \FAA\ on $A$ from the time $op_{del}$ performed its
\FAA\ on $A$ until $op_{del}$ reads $A.\x{value}$ on line~\ref{readValue}. 
The value read on line~\ref{readValue} will be stored in the 
\x{after} field of the new Batch that $op_{del}$ appends to the list on line~\ref{newBatch}.
This occurs after $op_{del}$ has performed the \FAA\ on $\x{Main}$ at line~\ref{FAAMain},
accomplishing all operations of $B$. 
Since only the delegate thread can add the next Batch to $A$'s list, 
$op_{del}$ can use a simple write at line~\ref{newBatch} to add this Batch.
The \FAA\ on $A$ on line \ref{FAAagg} by the delegate operation $op_{del}'$ of the \emph{next} batch $B'$ after $B$ must be after $op_{del}$ executes line~\ref{readValue}, but may also be before $op_{del}$ adds its new Batch to $A$'s 
list on line~\ref{newBatch}. In the latter case, $op_{del}'$ will wait at line \ref{wait} for $op_{del}$ to add its Batch 
to $A$'s batch list: only then will $op_{del}'$ evaluate the condition of the wait statement on line~\ref{wait} to be false
and proceed to execute lines~\ref{delegatedTest}--\ref{return1} itself. 

A \FetchAndAdd{} operation \x{op} on $O$  computes its result as follows.
If \x{op} is the first operation within its batch on \Agg\ $A$
(i.e., it is the delegate of its batch), it returns the result of its
\FAA\ on \x{Main} at line~\ref{return1};
this is the value that \x{Main} had before the batch's operations are applied to it.
If $op$ is not the delegate of its batch, 
it executes lines~\ref{search-begin}--\ref{return2} to find its response. 
Specifically, it 
%Otherwise, after \x{op} gets the result $\x{aBefore}$ from its \FAA\ 
%on $A.\x{value}$, \x{op} can determine its batch 
% by looking for a Batch object
first looks for a Batch object
$B$ in $A$'s list with $B.\x{before} \leq \x{aBefore} < B.\x{after}$ (lines \ref{search-begin}--\ref{search-end}).  This is the Batch to which $op$ belongs.
Then, $\x{aBefore}-B.\x{before}$ is the sum of the arguments of operations within the batch that precede \x{op}
(i.e., the operations that performed their \FAA\ on A.value before $\x{op}$).
So, $op$ returns $B.\x{mainBefore}+\x{aBefore}-B.\x{before}$ on line~\ref{return2}. 
Traversing the list is needed because $op$ might be so slow that several batches after $B$ could be added 
to $A$'s list by the time $op$ reads $A.\x{last}$ on line~\ref{read-last}.
In our experiments, we find that $97\%$ of operations locate their batch at the head of the list, thus avoiding looping on lines~\ref{search-begin}--\ref{search-end}.

\remove{An \Agg\ $A$'s list of batches is also used to determine which \FetchAndAdd\ 
operation is the first in its batch.
After getting the result \x{aBefore} from its \FAA\ on $A.\x{value}$, an operation \x{op}
waits at line~\ref{wait} until $A$'s list contains a Batch object 
whose \x{after} field is greater than or equal to \x{aBefore}.
If equality holds, then \x{op} is the first operation in its batch,
and it adds a new Batch object to $A$'s list.
The \x{after} value of this new Batch is determined by reading $A.\x{value}$ on line~\ref{readValue}.
Since only one thread can add the next Batch to $A$'s list, this addition can be accomplished by a simple
write at line~\ref{newBatch}.
}

\begin{figure}
\input{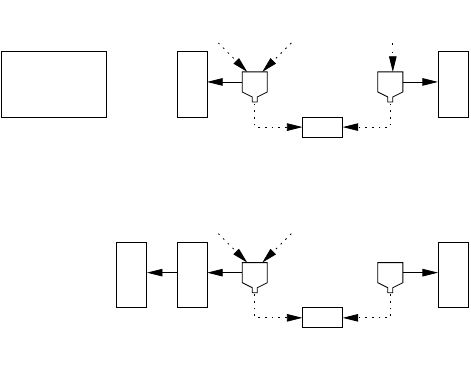_t}
\caption{Example execution with five \FetchAndAdd\ operations and two \Agg\ objects $A_1$ and $A_2$.
\label{counterexample}
\vspace{-0.1in}
}
\end{figure}

\Cref{counterexample} shows an example of how the data
structure evolves when accessed by five \FetchAndAdd\ operations.
The arguments and results of all hardware \FAA\ primitives are shown.
The upper diagram shows the data structure after three operations:  
two as a batch on \Agg\ $A_1$ and a single operation as a batch on \Agg\ $A_2$.  
The lower diagram shows the data structure after another batch of two operations is applied via $A_1$.
The linearization order of the threads' operations is $P_2, P_1, P_3, P_4, P_5$.
The operations by threads $P_1, P_2$, and $P_4$ see that they are the first operations
in their respective batches, since the value they receive from
their \FAA\ on an \Agg\ is that \Agg's \x{value}
after the previous Batch was applied (or 0 if there is no previous Batch).
Therefore, these delegate operations do
a \FAA\ on \x{Main}, while the non-delegate operations by $P_3$ and $P_5$ wait to compute
their results.
$P_3$'s \FAA(2) on $A_1.\x{value}$ returns 9.
It concludes that it belongs to $A_1$'s oldest Batch,
which takes $A_1.\x{value}$ from value 0 to 11.
\x{Main} had the value 5 before that batch was applied to it.
Thus, $P_3$ returns $5+9-0=14$.
The Batch that $P_3$ needs to compute its result remains accessible in $A_1$'s
 list of Batches, even after other
Batches are added, so $P_3$ can compute its result even if
it is delayed before accessing this list.
Similarly, $P_5$ finds the
Batch in $A_1$'s list containing 24 and computes its result as
$16+24-11=29$.

To handle \FetchAndAdd\ operations with negative arguments, the \Aggs\ are partitioned into
$m$ positive  and $m$ negative \Aggs.  A \FetchAndAdd\ 
chooses an \Agg\ of the type  matching its argument's sign.  (A \FetchAndAdd(0) simply
reads the \x{Main} variable; see line~\ref{FAA0}.)
\e{To simplify the code, when a \FetchAndAdd(\x{df})  applies its \FAA\ on an \Agg's $value$ field,
it uses the absolute value of \x{df}.  This ensures that the \Agg's value only increases, making it easy 
to determine which Batch an operation belongs to.
When a batch of operations from an \Agg\ for negative operands is applied to \x{Main},
we multiply the operand of the \FAA\ on \x{Main} by $-1$ at line \ref{FAAMain}.
Similarly, the sign has to be taken into account when computing the result of a non-delegate
operation with a negative operand at line \ref{return2}.
One effect of splitting operations according to their sign is that, even if the value
of the implemented object $O$ remains small, the $value$ fields of \Aggs\ may grow without bound.
We describe in \Cref{overflow-discussion} how the cyan code handles this.}

We prove in Section~\ref{proof} that \Cref{code-overflow} is linearizable, regardless of the number of \Aggs\ and
how  \Aggs\ are chosen at line~\ref{chooseAgg}.
Thus, these choices can be tuned to achieve good performance.
\Cref{evenSpread} shows one straightforward way to do this.
It divides the $p$ threads that access
the \op{Fetch\&Add} object $O$ into $\sqrt{p}$ groups of $\sqrt{p}$ threads each,
and assigns each group to one of the $m$ \Aggs\ of each type.
This limits contention on any shared variable to $\sqrt{p}$,
because at most $\sqrt{p}$ threads access each \Agg, and
at most one thread from each of the $\sqrt{p}$ groups can access \x{Main}
at any one time.
Operations could also be assigned to a random \Agg\ of the appropriate type.
We discuss how these choices were made for our experiments in \Cref{allocation-scheme}.
%We  discuss various possible
%schemes for assigning operations to \Aggs\ to achieve good performance in \Cref{allocation-scheme}.
%\Eric{We don't really talk about different schemes in \Cref{allocation-scheme}; we just talk about what value of $m$ seemed to work.}

Any other operations that can be applied atomically to a memory word can also
be applied to our object $O$, simply by applying them directly to \x{Main}.
In \Cref{code-overflow}, we show the code for performing a \op{Read} or \op{Compare\&Swap}, but other
operations  would work in exactly the same way.
Similarly, we also provide a \op{Fetch\&AddDirect} that applies a \FAA\ directly to \x{Main},
which can be used by high-priority threads to perform their operations
with lower latency.

\subsubsection{Handling Overflows in \Aggs}
\label{overflow-discussion}

The code shown in \begingroup \color{myblue}cyan\endgroup\ in Algorithm~\ref{code-overflow}
copes with the case that an \Agg's \x{value} field may overflow. 
When a Batch of \FetchAndAdd\ operations increases the \x{value} field of an \Agg\ $A$
beyond \threshold, defined on line \ref{rep-max} to be $2^{63}$,
$A$ is retired and replaced by a new \Agg.
We shall show that each of the $p$ threads can do at most one \FetchAndAdd\ on $A.\x{value}$ after crossing the threshold.
So, provided every argument to \FetchAndAdd\ is strictly less 
than $2^{63}/p$ in absolute value, this ensures that the value of $A.value$
never reaches $2^{64}$ and causes an overflow error.
\e{More generally, if we have a bound of $B$ on the 
arguments of \FetchAndAdd\ operations, we could instead define \threshold\ to be $2^{64}-p\cdot B$.}

This protects against overflow  in individual \Aggs.
If, however, the value of the implemented object $O$ overflows, then \x{Main} will overflow too,
and $O$ will behave in the same way that an overflow of a hardware fetch-and-add object would behave.
For example, if the hardware \FAA\ instructions wrap around when an overflow occurs, 
then \x{Main}'s value will wrap around, and so will $O$'s.  
\e{(We assume that the arithmetic in line \ref{return2} wraps around modulo $2^{64}$.)}

We now describe how an \Agg\ object $A$ gets retired after $A.\x{value}$ surpasses \threshold.
Delegate operations check on line \ref{overflow-check} whether the $\x{value}$ field of $\x{A}$ exceeds \threshold.
If so, the delegate retires $A$, meaning that $A$ cannot be used for any more batches of operations after $B$.
It does this by creating a new
\Agg\  on line \ref{newAgg} to replace $A$ in the $Agg$ array and then
setting $A.\x{final}$ on line \ref{setFinal} to $A$'s \x{value} after $B$.

Consider an operation $op$ that is too late to join $A$'s final Batch $B$,
i.e., $op$ performs its \FAA\ on $A.value$ on line~\ref{FAAagg} after the delegate of $B$ reads $A.value$ on line~\ref{readValue}.
We argue that $op$ will eventually go back to line \ref{read-agg} and use a different Aggregator.
$B$'s delegate sets $A.\x{final}$ on line \ref{setFinal} before it appends $B$ to $A$'s list of Batches on line \ref{newBatch}.  
Thus, until $B$'s delegate executes line \ref{setFinal}, 
$A.\x{last}.\x{after}$ is less than $op$'s value of \x{aBefore}, and $A.\x{final}=\infty$, so $op$ remains in its loop at line \ref{wait}. 
%\julian{is $B.\x{after}$  undefined until line 32 is executed? could we just say $A.\x{last}.\x{after} < \x{aBefore}$?}
After $B$'s delegate sets $A.\x{final}$, the test on line \ref{if-overflow} ensures that $op$ goes back
to line~\ref{read-agg}, and will not access $A$ again since $B$'s delegate replaced
$A$ with a new \Agg\ at line \ref{newAgg}.

We now argue that each thread $q$ does at most one \FAA\ on $A.\x{value}$ that
sets $A.\x{value}$ to a value greater than \threshold, which ensures that $A.\x{value}$ never overflows.
Suppose a thread does a \FAA\ at line \ref{FAAagg} of some operation $op$ that changes $A.\x{value}$ to a value greater than \threshold.
As argued above, $op$ cannot do another \FAA\ on $A.\x{value}$; if
the operation returns to line \ref{FAAagg} again, it accesses a different \Agg.
Either \x{op} belongs to the final Batch $B$ of $A$, or \x{op} performs its \FAA\ on $A.\x{value}$
too late to join that final Batch.
In either case, $B$'s delegate must retire $A$ before \x{op} can complete, and so the \emph{next} \FetchAndAdd\ 
by the thread $q$ will not use $A$.

The test on line \ref{wait} is a bit subtle when the second disjunct is added to handle overflows.
It requires reading two locations in shared memory:
$a.\x{last}$ and $a.\x{final}$.  
If a \FetchAndAdd\ operation $op$ exits the while loop and reaches line~\ref{read-last}, then
$a.\x{last}.\x{after} \geq \x{aBefore}$ at the first of the two reads (since the \x{after} field of the Batch $a.\x{last}$ is immutable)
and $\x{aBefore}<a.\x{final}$ at the second of the two reads.
If $a.\x{last}.\x{after}$ is strictly greater than \x{aBefore}, then $op$ belongs to a Batch $B$ whose delegate
has added $B$ to $A$'s list, so $op$ will continue on to determine its result using lines \ref{search-begin}--\ref{return2}, as in the case without overflow handling.
If $a.\x{last}.\x{after}$ is equal to $\x{aBefore}$, $op$ is the delegate of its batch of operations,
and the preceding batch did not retire $A$ (because $op$'s test saw that $\x{aBefore}<a.\x{final}$), 
so it is safe for $op$ to add a new Batch to $A$'s batch list
using lines \ref{readValue}--\ref{return1}, as in the case without overflow handling.

\subsubsection{Memory Management and Space Usage}
Our implementation in~\Cref{sec:experiments} uses epoch-based reclamation \cite{F03} for Batch and Aggregator objects.
Other safe memory reclamation techniques would also work.
% \Hao{add space usage discussion here}
We cannot prove any worst-case bound on memory usage when using epoch-based reclamation, however we can bound the number of objects that have been allocated and not yet retired to the epoch-based collector.
An Aggregator is retired as soon as it is no longer pointed to by the $\x{Agg}$ array and a Batch is retired as soon as it is not pointed to by an Aggregator.
Therefore, there are at most $\Theta(m)$ Aggregator and Batch objects that have not yet been retired.
In addition to these, we also use $\Theta(m)$ memory words to store the $\x{Main}$ and $\x{Agg}$ variables. So the overall space usage, if we do not count objects that have been retired and not freed, is $\Theta(m)$.

As a sidenote, for a counter, which supports only \op{Add} and \op{Read} operations, we can save space by not using Batch objects at all---if each \Agg\ simply stores
the value that would usually be stored in \x{last.after}, \op{Add} operations can detect when to stop waiting
for their batch to be applied to \x{Main} (as in line~\ref{wait} of \op{Fetch\&Add}).
This simplicity stems from the fact that an \op{Add} need not 
figure out a response value. 

\subsection{Applying the Construction Recursively}
\label{recursive-sec}

As described above, using $m=\sqrt{p}$   
reduces contention on any variable in a $p$-thread system
to $O(\sqrt{p})$.  If $p$ is very large, one can reduce contention even further by applying the construction
recursively.  We can replace \x{Main}
or any of the \Aggs' \x{value} fields by an instance of \Cref{code-overflow}.
We can repeat this process  
to any desired depth of recursion.

For example, consider a fetch-and-add object $O$ for $p$ threads implemented using \Cref{code-overflow} where we replace 
\x{Main} in $O$ by another instance $O'$ of \Cref{code-overflow}.
We use $m=p^{2/3}$ for $O$ and $m'=p^{1/3}$ for $O'$.
Suppose threads choose \Aggs\ as shown in \Cref{recursive}.  (For simplicity, the figure shows only the \Aggs\ for positive arguments.)
Contention on each \Agg\ of $O$ is at most $p/m = p^{1/3}$.
Contention on each \Agg\ of $O'$ is at most $m/m' = p^{1/3}$.
Contention on the variable $\x{Main}'$ of $O'$ is at most $m'=p^{1/3}$.
Thus, we have reduced contention on all variables to $O(p^{1/3})$.

Repeating this process of replacing \x{Main} by another instance of \Cref{code-overflow} $k$ times reduces
contention on any base object to $O({p}^{1/(k+1)})$.
Taking $k=\log_2 p$ reduces
%we have $\log(p^{1/(k+1)}) = \frac{\log_2 p}{k+1} \leq 1$, so $p^{1/(k+1)} \leq 2$, and
contention to $O(1)$ using $O(p)$
\Aggs\ in total.
A \FetchAndAdd\ operation would access at most $O(\log p)$ base objects.

Alternatively, we can repeatedly replace \emph{both} \x{Main} and all \Aggs' \x{value} fields
by \Cref{code-overflow}.
Doing this $k$ times reduces contention on any location to $O(p^{1/2^k})$.
Taking $k=\log\log p$ yields $O(1)$ contention using $O(p)$ Aggregators.

There is a tradeoff:  reducing contention on individual locations
requires a \FetchAndAdd\ to access more locations (or wait for others to do so).
Moreover, when a \FetchAndAdd\ operation must access more locations,
it spends a smaller fraction of its time at each one, so it is 
less likely to contribute to contention at that location at any particular time.  
Thus, the actual contention at a location will typically be smaller than the worst-case upper bound.  
So, it is impractical to try to reduce the worst-case contention too much:
this will cost time (to access more \Aggs) without the payoff of reducing contention in practice.
Indeed, our experiments revealed no advantage of using even a single replacement
(as shown in \Cref{recursive}) for values of $p$ up to 176.  The recursive construction would pay off only for very large thread counts.

\begin{figure}
\input{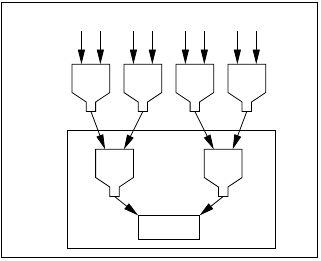_t}
\caption{Example of recursive construction with $p=8$.\label{recursive}
%\Y{I would prefer a "reversed" figure where Main appears at the top.}
\vspace{-0.2in}
}    
\end{figure}

% \Y{I like this subsection. Very nice discussion.}

\subsection{Correctness}
\label{proof}

We prove that \Cref{code-overflow} is linearizable.
Each operation is linearized when it is applied to \x{Main}, either 
as part of a batch in the case of \op{Fetch\&Add}, or
individually in the case of the other operations
(\op{Read}, \op{Fetch\&AddDirect}, and \op{Compare\&Swap}).
We show in \Cref{main-inv} that this ensures that \x{Main} always stores the true value that the implemented fetch-and-add object should have. 
% \Y{I use fetch-and-add to refer generally to the implemented object: so, neither to the hardware primitive (which is denoted by \FAA), nor to the operation supported by the proposed implementation (denoted by \FetchAndAdd).}
We must show that the effect of each \FetchAndAdd\ operation $\x{op}$ is applied to \x{Main}
%(as part of a batch) to \x{Main} 
exactly once and that $\x{op}$'s response is consistent with the linearization (\Cref{consistent}). 
We must also show (in \Cref{real-time}) that \x{op}'s linearization point is between its invocation and response.
Since our linearization points can be identified as the execution unfolds, without knowledge of later events,
the implementation is also strongly linearizable~\cite{GHW11}.

\Siddhartha{I like the proof that you drew up here, Eric. My only additional though: We could add references to the Lemmas below for each of the steps that are mentioned in the intro paragraph above, so that the less theoretically oriented PPoPP reviewers can skip the rest of the section with confidence that things were proved methodically.}
\Y{I agree with Siddhartha. Also, the summary above is probably too high level. It would be nice if we come up
with a summary that refers to the different lemmas and their role in the entire proof.}
\Eric{I added refs to lemmas, but time (and space) is running short to write more of a high-level proof; I think we did try to explain the intution about why the algorithm works in the algorithm description, so I hope that's enough.}

We first prove the following invariant, which  ensures that $A$'s Batch list is sorted
by \x{before} fields and that the \x{before} field of one Batch matches the previous Batch's \x{after} field.

\begin{invariant}
\label{list-inv}
Let $A$ be an \Agg\ object.
If the list of Batch objects reachable from $A.\x{last}$ by following \x{previous} pointers
is $B_k, B_{k-1}, \ldots, B_0$, then
$A.\x{value} \geq B_k.\x{after}$, for $1\leq j\leq k$, $B_j.\x{after} > B_j.\x{before} = B_{j-1}.\x{after}$, 
and $B_0.\x{after} =0$.
\end{invariant}
\begin{proof}
Initially, $k=0$ and $A.\x{value} = 0 = B_0.\x{after}$.

Batch $B_k$ is written to $A.\x{last}$ at line \ref{newBatch} of some \FetchAndAdd.
$B_k.\x{after}$ is read from $A.\x{value}$ on line \ref{readValue}.
Since all \FetchAndAdd\ operations applied to $A$ have positive arguments and $A.\x{value}$ never overflows by the argument in Section \ref{overflow-discussion}, 
$A.\x{value}$ can only increase. % and $B_k.\x{after}$ is immutable,
Thus, $A.\x{value} \geq B_k.\x{after}$.

Consider any Batch $B_j$ created at line~\ref{newBatch}.  $B_j.\x{before}$ is the value 
\x{aBefore}
that the \FAA\ on $A.\x{value}$ at line~\ref{FAAagg} returned,
and $B_j.\x{after}$ is the value read from $A.\x{value}$ at line~\ref{readValue}.
$A.\x{value}$ only increases, so $A.\x{value}$ at line~\ref{readValue} is strictly larger
than the result of the \FAA\ at line~\ref{FAAagg} (since the \FAA's argument is not 0, by the test on line~\ref{FAA0}).
Hence, $B_j.\x{after} > B_j.\x{before}$.
Moreover, line~\ref{newBatch} sets $B_j.\x{previous}$ to \x{last}, and by the test on line~\ref{delegatedTest},
$\x{last}.\x{after}$ is equal to the value \x{aBefore} stored in $B_j.\x{before}$.
Thus, $B_j.\x{before} = B_{j-1}.\x{after}$.
\end{proof}

\Eric{I tried to rewrite the following paragraph to address Youla's concern that some of it was hard to understand.}

We now define linearization points more formally.
\Read, \FetchAndAddDirect, \CompareAndSwap, and \FetchAndAdd(0) \linebreak are each linearized when they access \x{Main}.
We linearize the \FetchAndAdd\ operations with non-zero arguments as follows.
Whenever a delegate \FetchAndAdd\ operation \x{op} that chose an \Agg\ $A$
performs a \FAA\ on \x{Main} at line \ref{FAAMain},
we linearize all operations in \x{op}'s batch (in the order of their \FAA\ operations on $A$.\x{value}).
Recall that the operations in \x{op}'s batch are those that perform a
\FAA\ on $A.\x{value}$ during the interval of time between \x{op}'s accesses to $A.\x{value}$ on
lines \ref{FAAagg} and \ref{readValue} (including \x{op} itself).
The \x{before} and \x{after} fields of the Batch that \x{op} adds to $A$'s list store the
values $A.\x{value}$ had at the beginning and end of this interval.
It follows from \Cref{list-inv} that the intervals for two delegate operations that used the \Agg\ $A$ do not overlap.
Therefore, each operation is assigned a unique linearization point.

\begin{lemma}\label{real-time}
Each operation is linearized between its invocation and its response.
\end{lemma}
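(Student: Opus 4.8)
The plan is to handle the two families of operations separately and to do the real work only for the one nontrivial case. Each of \Read, \FetchAndAddDirect, \CompareAndSwap, and \FetchAndAdd(0) is linearized at its single access to \x{Main} (at line~\ref{direct}, at the read of \x{Main} reached through line~\ref{FAA0}, and so on); since that access is itself a step of the operation, it occurs strictly between the operation's invocation and response, and the claim is immediate. A \FetchAndAdd\ operation that becomes the delegate of its batch reaches line~\ref{FAAMain} and is linearized at its own \FAA\ on \x{Main}, which again lies between its invocation and its return at line~\ref{return1}. So the only case needing argument is a non-delegate \FetchAndAdd\ operation \x{op}, which is linearized not at one of its own steps but at the \FAA\ on \x{Main} executed at line~\ref{FAAMain} by the delegate \x{del} of \x{op}'s batch; I must show this point lies between \x{op}'s invocation and response.

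For the lower bound, I would chain real-time order through a sequence of events. First, \x{op}'s invocation precedes \x{op}'s own \FAA\ on $A.\x{value}$ at line~\ref{FAAagg}. By the way batches are defined, that \FAA\ lies in the interval delimited by \x{del}'s two accesses to $A.\x{value}$ at lines~\ref{FAAagg} and~\ref{readValue}, so it precedes \x{del}'s read at line~\ref{readValue}; and by program order \x{del}'s read precedes \x{del}'s \FAA\ on \x{Main} at line~\ref{FAAMain}. Composing these inequalities shows the linearization point occurs after \x{op}'s invocation.

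For the upper bound, the key observation is that \x{op} cannot pass the spin at line~\ref{wait} until the Batch object $B$ that contains \x{op} has been installed at line~\ref{newBatch}. Because \x{op} is a non-delegate, the value \x{aBefore} it read at line~\ref{FAAagg} satisfies $|B.\x{before}| < |\x{aBefore}| < |B.\x{after}|$. By \Cref{list-inv}, the \x{after} fields of the Batches reachable from $A.\x{last}$ strictly increase in absolute value, and each Batch's \x{before} equals the previous Batch's \x{after}; hence every Batch installed before $B$ has $|\x{after}| \le |B.\x{before}| < |\x{aBefore}|$. Consequently the condition $|A.\x{last}.\x{after}| \ge |\x{aBefore}|$ tested at line~\ref{wait} cannot hold before $B$ is installed at line~\ref{newBatch}. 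Since \x{del} executes line~\ref{newBatch} only after its \FAA\ on \x{Main} at line~\ref{FAAMain}, \x{op}'s spin terminates after the linearization point, and thus \x{op}'s response at line~\ref{return2} occurs after it.

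I expect the upper bound to be the main obstacle: it is not enough that some Batch eventually appears, one must use the precise ordering from \Cref{list-inv} to conclude that a waiting non-delegate is forced to block past the delegate's update of \x{Main}. The enabling fact is the sign-invariance of an \Agg---all arguments applied to a single \Agg\ share a sign, so $|A.\x{value}|$ is monotone nondecreasing---which both justifies the ``strictly increasing absolute value'' ordering of Batches and rules out the degenerate case in which a non-delegate's \x{aBefore} coincides with an earlier Batch boundary (the case $|B.\x{before}| = |\x{aBefore}|$, which would instead make \x{op} a delegate).
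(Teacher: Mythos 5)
Your proof is correct and takes essentially the same route as the paper's: the operations linearized at their own steps are dispatched identically, the lower bound comes from the non-delegate's \FAA\ on $A.\x{value}$ preceding the delegate's \FAA\ on \x{Main}, and the upper bound uses \Cref{list-inv} to show the spin at line~\ref{wait} cannot exit before the delegate installs its Batch at line~\ref{newBatch}, which by program order follows the \FAA\ on \x{Main}. The only difference is presentational---the paper argues forward (a completed wait implies some Batch with $|\x{after}| > |\x{aBefore}|$ exists, and by \Cref{list-inv} the delegate added the first such Batch), whereas you argue the contrapositive---so the two proofs have identical content.
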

\begin{proof}
The claim is trivial if the operation is linearized at its own step.
So, for the remainder of the proof, consider a non-delegate \FetchAndAdd\ $\x{op}'$ that is linearized at the \FAA\ on \x{Main} by its Batch's delegate operation \x{op}.
Let $A$ be the \Agg\ chosen by \x{op} and $\x{op}'$.
By definition, $\x{op}'$ performed a \FAA\ on $A.\x{value}$ before $\x{op}$'s \FAA\ on \x{Main},
so the linearization point of $\x{op}'$ is after $\x{op}'$ is invoked.
Since $\x{op}'$ is not a delegate operation, it cannot terminate on line \ref{return1}.
So, suppose $\x{op}'$ terminates at line~\ref{return2}.
%\julian{remove "suppose" since it must be true? }\Eric{$op'$ might never terminate.}
Since $\x{op}'$ completed the waiting loop at line \ref{wait},
some operation added a Batch $B$ to $A$'s list with $B.\x{after}$ strictly greater  than
the result of the \FAA\ $\x{op}'$ performed on $A.\x{value}$.
It follows from \Cref{list-inv} that \x{op} is the operation that added the \emph{first} such
Batch to $A$'s list, which must have happened before $\x{op}'$ completed its waiting loop.
\Y{Justify in a more detailed way the claim in the which part of the previous sentence.
BTW, the claim is correct but it would be nice if the reader does not have to think in order to verify it.}
Thus,  $\x{op}'$ is linearized when $\x{op}$ performs its \FAA\ on \x{Main}, which is
before $\x{op}'$ terminates.
\end{proof}

We prove the following key invariant by induction.

\begin{invariant}\label{main-inv}
At all times $t$,
\x{Main} stores the value that $O$ would have if all operations linearized before $t$ were performed sequentially in the
order of their linearization points. 
\end{invariant}

\begin{proof}
\noindent
\textit{Base case.} The invariant holds initially, since $\x{Main}=0$.

\noindent
\textit{Inductive step.}
We show that the invariant is preserved by each step that accesses \x{Main}. (Only these steps  are linearization points.)
This is clear for accesses to \x{Main} by all operations other than \FetchAndAdd\ operations with non-zero arguments.
So, consider a \FetchAndAdd\ operation \x{op} that chooses an \Agg\ $A$ for positive arguments and performs a $\FAA(\x{aAfter}-\x{aBefore})$ on \x{Main} at line \ref{FAAMain} using the value \x{aAfter} obtained by reading $A.\x{value}$ at
line \ref{readValue} and the value \x{aBefore} obtained from its \FAA\ on $A.\x{value}$ at line
\ref{FAAagg}.
The \FetchAndAdd\ operations linearized at \x{op}'s \FAA\ on \x{Main} are exactly those that 
perform their \FAA\ on $A.\x{value}$ in between these two steps, so the sum of their
arguments is exactly $\x{aAfter}-\x{aBefore}$.
Thus, this \FAA\ on \x{Main} preserves the invariant.
The argument is similar if the \Agg\ is for negative arguments:  in this case, \x{op}
performs $\FAA(-(\x{aAfter}-\x{aBefore}))$ on \x{Main} at line \ref{FAAMain}.
\end{proof}

\remove{
Consider the sequence of \FetchAndAdd\ operations $\x{op}_1, \x{op}_2, \ldots$ that access a 
particular \Agg\ object $a$ throughout the execution, in the order of their \FAA\ 
operations on $a.\x{value}$.
Let $v_1, v_2, \ldots$ be the values they obtain from those \FAA\ operations.
Since all arguments of the \FAA\ operations on $a.\x{value}$ have the same sign,
we have $0=|v_1| < |v_2| < \cdots$.

The initial Batch $B_0$ stored in $a.\x{last}$ has $B_0.\x{after}=0$.
Thus, the only operation that can ever change $a.\x{last}$ from $B_0$ to another value is $\x{op}_1$, because
the operation must pass the test on line \ref{delegatedTest}.
Let $i_1=1$.
Then, $\x{op}_{i_1}$ eventually adds a Batch $B_1$ with $B_1.\x{previous} = B_0$ and $B_1.\x{after}$ equal to the value $after_{i_1}$ that $\x{op}_{i_1}$ reads on line \ref{readValue}.
\Y{I do not follow the notation here.}

Once $a.\x{last}$ is changed from $B_0$ to $B_1$, the only operation that can change $a.\x{last}$ 
again is the unique operation $\x{op}_{i_2}$ that has $v_{i_2} = \x{after}_{i_1}$, since it must pass the test 
on line \ref{delegatedTest}.
So, $\x{op}_{i_2}$ eventually adds a Batch $B_2$ with $B_2.\x{previous} = B_1$ and $B_1.\x{after}$ equal to the value $\x{after}_{i_2}$ that $\x{op}_{i_2}$ reads on line \ref{readValue}.

In general, once batch $B_j$ is installed in $a.\x{last}$ by some operation $\x{op}_{i_j}$, only the operation
$\x{op}_{i_{j+1}}$ that has $v_{i_{j+1}} = B_j.\x{after}$ can ever install the next Batch $B_{j+1}$ in $a.\x{last}$.
This observation also explains why only one operation from each \Agg\ object can be in the
block of code at lines \ref{readValue}--\ref{newBatch} at a time.
In particular, this limits the contention on \x{Main} to $2m$, the number of \Agg\ objects the algorithm uses.

Operations $\x{op}_{i_j+1}, \x{op}_{i_j+2}, \ldots, \x{op}_{i_{j+1}-1}$ must wait at line \ref{wait} until $\x{op}_{i_{j}}$ installs $B_{j}$,
and then they can compute their outcomes.
All of these operations are linearized together with $\x{op}_{i_j}$ at the time $\x{op}_{i_j}$ performs its \FAA on
\x{Main}, in the order of their \FAA operations on $a.\x{value}$, since the amount of the increment on $Main$ is exactly equal to the sum of the batch. Therefore, the representative delegate of each batch has the correct return value. 
The participants also have correct return values because \here{to be continued}.

\Y{Use primitives for \FAA (i.e. a \FAA\ primitive) and operations for \FetchAndAdd (i.e., a \FetchAndAdd\ operation).}
}

\begin{lemma}\label{consistent}
Each operation's response is consistent with the linearization.
\end{lemma}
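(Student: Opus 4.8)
The plan is to case-split on the type of operation and lean almost entirely on \Cref{main-inv}. For \Read, \FetchAndAddDirect, \CompareAndSwap, and \FetchAndAdd(0), the argument is immediate: each is linearized exactly at the step where it accesses \x{Main}, so by \Cref{main-inv} the value it reads from (or compares against) \x{Main} is precisely the value $O$ ought to have at that moment given the operations linearized before it. The response then follows directly from the sequential semantics of the corresponding primitive. All the substance is in \FetchAndAdd\ operations with non-zero arguments, which I would handle in two subcases matching the two return statements.

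First subcase: \x{op} is the first operation in its batch and returns \x{mainBefore} at line~\ref{return1}. Here \x{op} is linearized at its own \FAA\ on \x{Main} at line~\ref{FAAMain}, and by the definition of the linearization order (operations ordered by their \FAA\ on $A.\x{value}$) it is the \emph{earliest} operation linearized at that point. Hence the value $O$ should have immediately before \x{op} equals the value of \x{Main} just before this \FAA, which by \Cref{main-inv} is exactly \x{mainBefore}; so \x{op}'s response is consistent.

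Second subcase: \x{op} is not first and returns $\x{batch}.\x{mainBefore}+\x{aBefore}-\x{batch}.\x{before}$ at line~\ref{return2}. I would first show the search loop (lines~\ref{search-begin}--\ref{search-end}) terminates at the unique Batch $B$ with $|B.\x{before}|\le|\x{aBefore}|<|B.\x{after}|$. Existence and uniqueness follow from \Cref{list-inv}, since the \x{before}/\x{after} intervals of consecutive Batches tile $[0,|A.\x{last}.\x{after}|)$ in absolute value; the lower endpoint is enforced by the loop's stopping condition; and the strict upper endpoint at the head of the list holds because \x{op} failed the test at line~\ref{delegatedTest} while the waiting loop at line~\ref{wait} guaranteed $|A.\x{last}.\x{after}|\ge|\x{aBefore}|$. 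This $B$ is precisely the Batch recording \x{op}'s own batch, whose delegate was linearized first at its \FAA\ on \x{Main}; by \Cref{main-inv} the value of $O$ immediately before that batch is $B.\x{mainBefore}$. Finally, because the values returned by the \FAA\ operations on $A.\x{value}$ within a batch form a contiguous run of increments (in absolute value) from $B.\x{before}$ up to $B.\x{after}$, the arguments of the operations linearized before \x{op} within its batch telescope to $\x{aBefore}-B.\x{before}$. Adding this to $B.\x{mainBefore}$ yields exactly \x{op}'s returned value.

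The main obstacle I anticipate is this second subcase: tying the within-batch linearization order (defined by the order of \FAA\ operations on $A.\x{value}$) to the telescoping sum $\x{aBefore}-B.\x{before}$, and carefully justifying, via \Cref{list-inv} together with the interplay of lines~\ref{wait}, \ref{delegatedTest}, and the search loop, that \x{op} still locates the correct Batch even if it is delayed arbitrarily after further Batches have been appended to $A$'s list (immutability of Batch fields is what makes this safe). By contrast, the non-\FetchAndAdd\ operations and the first-in-batch case are routine once \Cref{main-inv} is available.
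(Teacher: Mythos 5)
Your proposal is correct and follows essentially the same route as the paper's proof: both dispatch \Read, \FetchAndAddDirect, \CompareAndSwap, and \FetchAndAdd(0) directly via \Cref{main-inv}, and both handle a batch by telescoping the \FAA\ results on $A.\x{value}$ so that the response $B.\x{mainBefore}+\x{aBefore}-B.\x{before}$ computed at line~\ref{return2} matches the within-batch linearization order, with the first-in-batch case at line~\ref{return1} falling out as the $j=1$ instance. If anything, your second subcase is more complete than the paper's own proof, which skips the argument (acknowledged in a draft comment) that the search loop of lines~\ref{search-begin}--\ref{search-end} actually locates the correct Batch; your derivation of existence and uniqueness of $B$ from the interval tiling in \Cref{list-inv}, combined with the failed test at line~\ref{delegatedTest} and the wait at line~\ref{wait}, supplies exactly that missing step.
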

\begin{proof}
Operations other than \FetchAndAdd\ are
linearized at their access to \x{Main}, so the claim is immediate from \Cref{main-inv}.
Consider a \FAA\ on \x{Main} that is the linearization point of a batch of \FetchAndAdd\ operations $\x{op}_1, \ldots, \x{op}_k$ with arguments $\x{df}_1, \ldots, \x{df}_k$ that all chose the same \Agg\ $A$ (in the order they perform their \FAA\ on $A.\x{value}$).
Then, $\x{op}_1$ is the operation that performs the \FAA\ on \x{Main} with argument $\sum_{i=1}^k \x{df}_i$.
Let $B$ be the Batch object that $\x{op}_1$ creates on line \ref{newBatch}.
Then $B.\x{mainBefore}$ is the value returned by $\x{op}_1$'s \FAA\ on \x{Main}
and $B.\x{before}$ is the value returned by $\x{op}_1$'s \FAA\ on $A.\x{value}$.
Then, $\x{op}_j$ gets the result $\x{bef}_j = B.\x{before} + \sum_{i=1}^{j-1}\x{df}_i$ from its \FAA\ on $A.\x{value}$.
By \Cref{main-inv}, $\x{op}_j$'s response should be $B.\x{mainBefore} + \sum_{i=1}^{j-1}\x{df}_i = B.\x{mainBefore} + \x{bef}_j - B.\x{before}$, which is the value $\x{op}_j$ returns on line \ref{return2}.
%\Eric{Technically, we should probably argue, using \Cref{list-inv} that $\x{op}_j$ finds the right Batch $B$ when it searches $A$'s list.}
\end{proof}

\Cref{real-time,consistent} establish the following main result.

\begin{theorem}
Algorithm~\ref{code-overflow} is a strongly linearizable implementation of a \FetchAndAdd\ object.
\end{theorem}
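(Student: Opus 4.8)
The plan is to observe that the theorem is essentially an assembly of the already-established invariants and lemmas, together with one additional observation that upgrades linearizability to \emph{strong} linearizability. First I would recall the two properties that a linearization must satisfy: for every (possibly incomplete) execution we must produce a total order on the operations --- including every completed operation and every pending operation whose linearization point has already occurred --- such that (i) the order respects real-time precedence, and (ii) the sequential history induced by this order returns exactly the responses that occur in the execution. Property (i) is exactly \Cref{real-time}, which places each operation's linearization point strictly between its invocation and response, so that if $\x{op}$ finishes before $\x{op}'$ begins then $\x{op}$ precedes $\x{op}'$. Property (ii) is exactly \Cref{consistent}. Before invoking these I would restate the well-definedness already argued after \Cref{list-inv}: by that invariant the half-open value intervals $[\x{aBefore},\x{aAfter})$ of distinct batches on the same \Agg\ are disjoint, so every \FAA\ on an \Agg's \x{value} falls in exactly one batch; hence each \FetchAndAdd\ belongs to a unique batch whose combined effect is applied to \x{Main} exactly once, and each operation receives exactly one linearization point.

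Next I would handle strong linearizability, which is the only genuinely new content beyond the lemmas. Following Golab, Higham and Woelfel, it suffices to exhibit a single prefix-preserving linearization function: extending the execution must never reorder operations that have already been linearized. I would argue this directly from the definition of the linearization points. \Read, \FetchAndAddDirect, \CompareAndSwap\ and \FetchAndAdd(0) are linearized at their own single access to \x{Main}, which is plainly an online choice. A non-zero \FetchAndAdd\ $\x{op}'$ is linearized at the step where the delegate $\x{op}$ of its batch performs the \FAA\ on \x{Main} at line~\ref{FAAMain}; I would observe that at that step the batch is already sealed --- both its membership and the internal order are fixed by past events, because every member (including $\x{op}'$) performed its \FAA\ on $A.\x{value}$ before $\x{op}$'s read of $A.\x{value}$ at line~\ref{readValue}, which in turn precedes the \FAA\ on \x{Main}. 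Thus, at the instant each linearization point is executed, the prefix of the linearization through that point is determined and is never disturbed by later steps, giving prefix preservation.

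I expect the main obstacle to be making this online property airtight, i.e.\ verifying that the set of operations linearized at a given \FAA\ on \x{Main}, and their relative order, is a deterministic function of the events up to and including that step rather than of anything later. The key check is that the window defining $\x{op}$'s batch --- the operations that perform their \FAA\ on $A.\x{value}$ between $\x{op}$'s two accesses on lines~\ref{FAAagg} and~\ref{readValue} --- closes no later than $\x{op}$'s \FAA\ on \x{Main}, which holds because line~\ref{readValue} precedes line~\ref{FAAMain}. The remaining quantitative verification, that the increment $\x{aAfter}-\x{aBefore}$ applied at line~\ref{FAAMain} equals the sum of the batch's arguments so that \x{Main} always holds the object's true value, has already been discharged in \Cref{main-inv}. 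Combining \Cref{real-time,consistent} with this online property then yields the theorem.
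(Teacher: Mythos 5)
Your proposal is correct and follows essentially the same route as the paper: the theorem is obtained by combining \Cref{real-time,consistent}, with strong linearizability justified exactly as the paper does---by noting that the linearization points are determined as the execution unfolds, without knowledge of later events. Your only addition is to spell out this one-sentence remark as an explicit prefix-preservation argument (the batch's membership and internal order are fixed before the delegate's \FAA\ on \x{Main}, since line~\ref{readValue} precedes line~\ref{FAAMain}), which is a faithful elaboration rather than a different approach.
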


Since we can always replace an atomic object by a linearizable implementation, it follows that the recursive constructions described in \Cref{recursive-sec} are also linearizable.

        % \input{algorithms_simple}
        %\input{algorithms_edit}
        % \input{algorithms_edit2}

%%%%%%%%%%%%%%%%%%%%%%%%%%%%%%%%%%%%%%%%%%%%%%%%%%%%%%%%%%%%%%%%

\section{Experimental Evaluation}
\label{sec:experiments}

The goals of our experiments are to explore different parameter choices for Aggregating Funnels (\Cref{allocation-scheme}), compare Aggregating Funnels with hardware \FAA\ and the fastest existing  software \FetchAndAdd{} (\Cref{sec:main_benchmark}), explore the effectiveness of using \op{Fetch\&AddDirect} to speed up high-priority threads (\Cref{sec:high-priority}), and observe the performance when we deploy Aggregating Funnels in a state-of-the-art concurrent queue (\Cref{sec:queue-benchmark}).

\subsection{Experimental Setup}
\label{sec:experiment-setup}

\begin{figure*}
    \begin{minipage}{\textwidth}
    \includegraphics[width=0.8\textwidth]{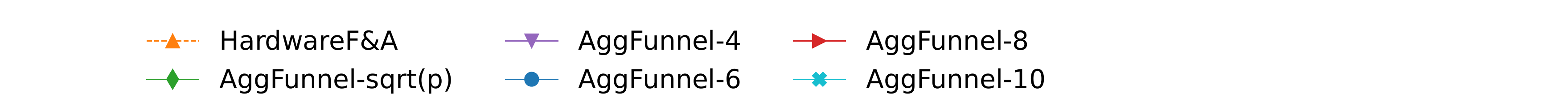}
    \end{minipage}
    \\
    \begin{minipage}{0.32\textwidth}
    \includegraphics[width=\linewidth]{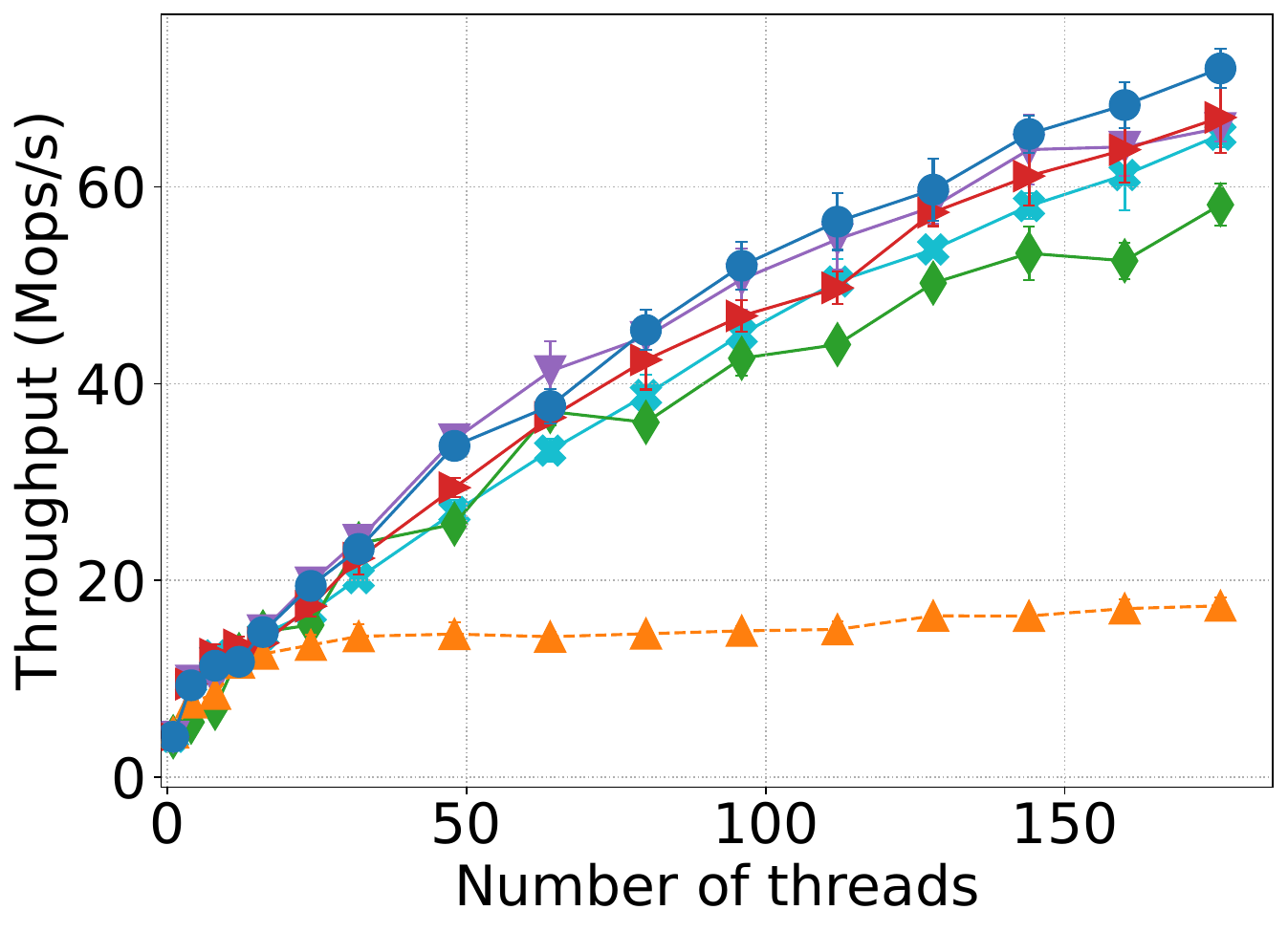}
    \subcaption{90\% \FetchAndAdd{}, throughput}
    \label{fig:child_thr_90}
    \end{minipage}
    \begin{minipage}{0.32\textwidth}    \includegraphics[width=\linewidth]{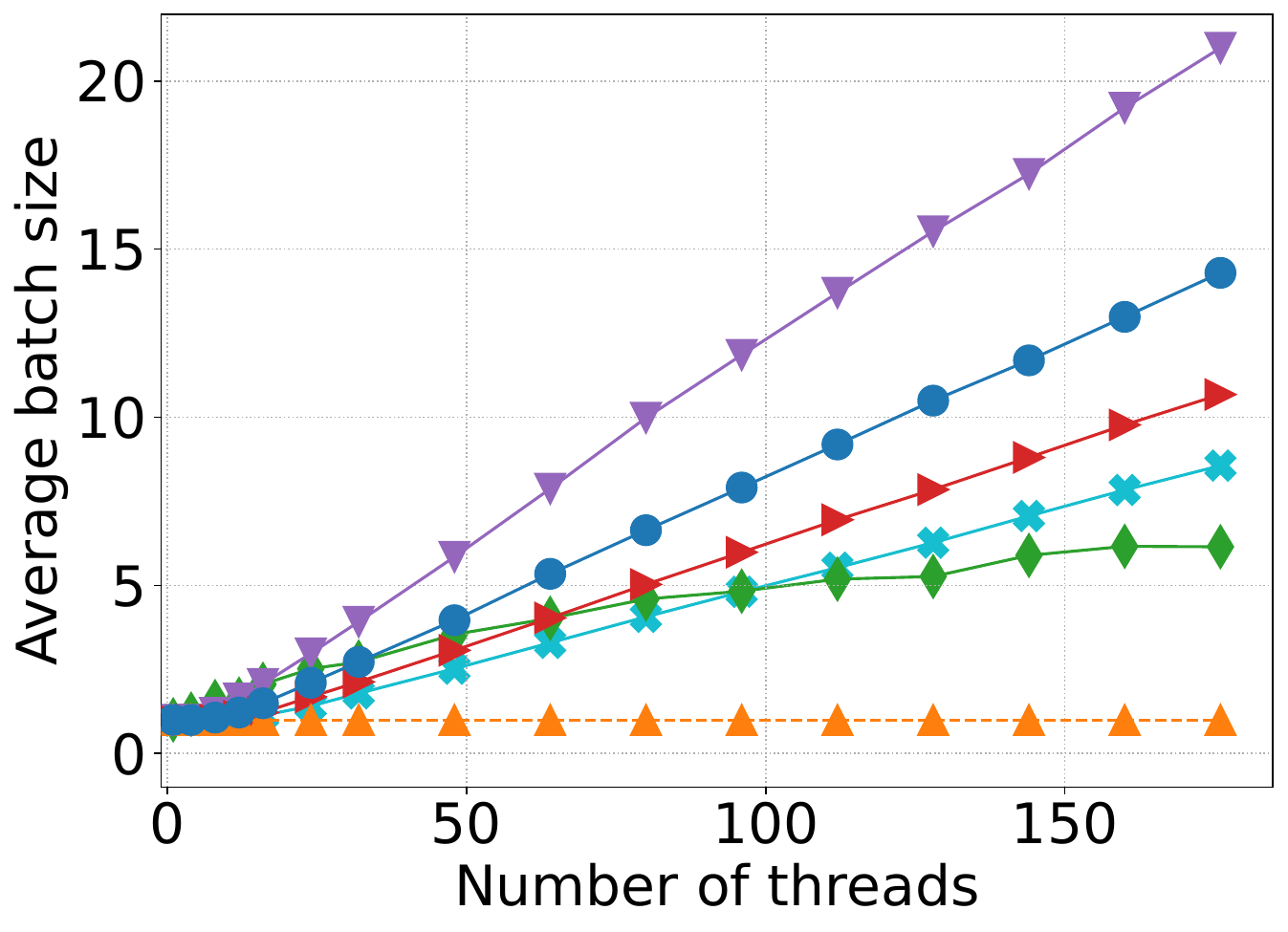}
    \subcaption{90\% \FetchAndAdd{}, batch size}
    \label{fig:child_batch_90}
    \end{minipage}
    \begin{minipage}{0.32\textwidth}    
    \includegraphics[width=\linewidth]{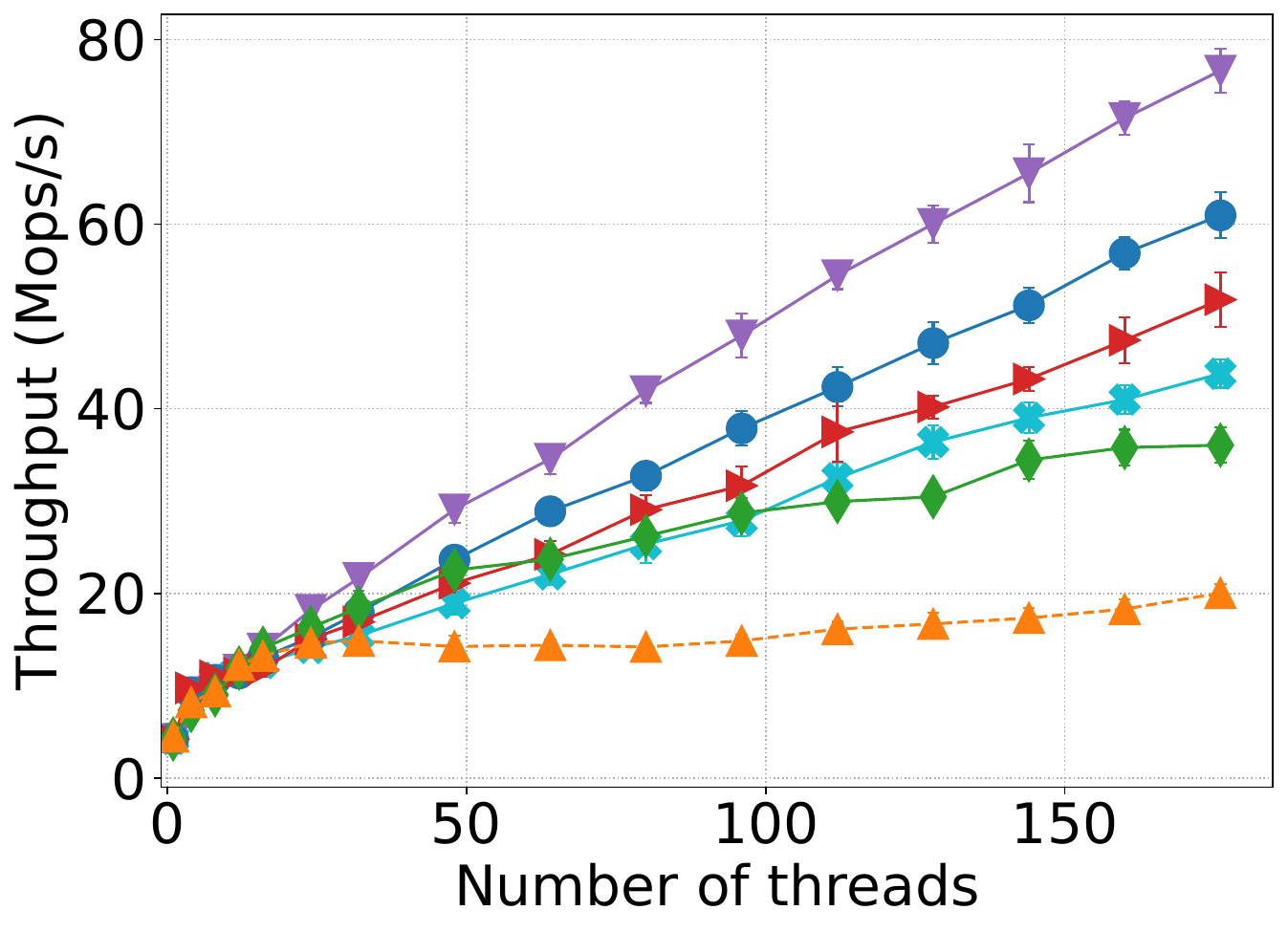}
    \subcaption{50\% \FetchAndAdd{}, throughput}
    \label{fig:child_thr_50}
    \end{minipage}
    \caption{\FetchAndAdd\ performance with different numbers of \Aggs{}.
    \vspace{-0.1in}
    }
    \label{fig:child_main}
\end{figure*}

We used Google Cloud Platform's \texttt{c3-standard-176} instance, which has four 4th Gen
Intel Xeon Platinum 8481C processors with a total of 176 hyper-threads with 2-way hyper-threading, and 704GB of main memory.
We also briefly discuss results on an AMD machine and older Intel machines at the end of \Cref{sec:main_benchmark}.
Our \FetchAndAdd\ and queue benchmarks are implemented in C++, and compiled with \texttt{g++ 13.2.0} with the \texttt{-O3} and \texttt{-std=c++17} flags. We used \texttt{mimalloc} for scalable memory allocation and \texttt{numactl -i all} to distribute memory evenly across the four sockets.

% \Hoon{Added:} 
We ran experiments with the simpler version of \Cref{code-overflow} without the code in \begingroup\color{myblue}cyan\endgroup\ for handling overflows.
We believe the overhead added by the overflow handling code should be insignificant in the common case where overflows are infrequent.
 % since difference of the performance of the full version and the simpler version were insignificant.
We used the appropriate memory fences for correctness in weak memory models,
%on Intel's TSO memory model, 
and memory alignment %for all objects 
to avoid false sharing.
Our implementation uses epoch-based reclamation \cite{F03} to safely free shared memory.

% The instance works on Ubuntu 24.04 operating system, with linux kernel version 6.8.0.

All \FetchAndAdd\ benchmarks were run for 2 seconds with random arguments between 1 and 100, and with 10 repetitions to average the results.
The error bars in each plot show the standard deviation of the 10 runs, which was small in most cases.
To model a context where a fetch-and-add object is used in a larger algorithm,
we added a geometrically distributed random amount of additional local work between a thread's operations on the %fetch-and-add 
object.
We varied %several parameters: 
the ratio between  \op{Read}() and \op{Fetch\&Add} operations, the number of threads, and the amount of additional work.
Unless stated otherwise, experiments 
%reported in \cref{allocation-scheme} to \cref{sec:queue-benchmark} 
used
% a workload of 90\% \op{Fetch\&Add} and
% \Eric{I removed the mention of defaulting to 90\% FAA because all the graphs mention the percentage, and most of them are not 90.}
a mean of 512 hardware cycles, or roughly 0.2 microseconds, of additional work between operations on the fetch-and-add object.
% 1.5 cache misses -> better sense of scale & justification of realistic workloads
% Note) random here is geometric distribution with pseudo-random generator

We measured the \emph{throughput}, i.e., the total number of operations across all of the threads per unit time, of each algorithm to compare their performance. 
We also collected several auxiliary measurements to further understand their behavior, from which we derived two significant metrics.
%: average batch size and fairness.
\emph{Average batch size} is the average number of operations that are aggregated into one \FAA\ on \x{Main}.
Larger batch sizes imply less contention on \x{Main}.
As our \emph{fairness} metric, we use the ratio between the minimum and maximum number of operations completed by a thread. % more clearly?
Lower fairness indicates that different threads have highly imbalanced throughput.

%%%%%%%%%%%%%%%%%%%%%%%%%%%%%%%%%%%%%%%%%%%%%%%%%%%%%%%%%%%%%%%%

\subsection{Choosing Number of Aggregators}
\label{allocation-scheme}

The number of \Aggs\ can change the behavior of Aggregating Funnels in various workloads.
Having more \Aggs\ will increase contention on \x{Main}, but it will reduce contention at each \Agg{}'s \x{value}.
The optimal balancing point may vary depending on ratio of \op{Read} and \FetchAndAdd\ operations in workload since \op{Read} operations also contend on \x{Main},
and on the number of threads where hardware \FAA\ reaches its maximum throughput.

In our graphs, \op{AggFunnel}-$m$ denotes the Aggregating Funnels with $m$ \Aggs\ for positive arguments.
(We did not use the $m$ \Aggs\ \e{for negative arguments since all arguments in our experiments} were positive.)
In this section, we study how varying $m$ affects performance.
We use a simple scheme for assigning operations to \Aggs\ that is static and symmetric, which means that a thread chooses the same \Agg\ for all of its operation, and threads are distributed evenly so that 
the maximum contention at different \Aggs\ differ by at most one.

To balance the maximum contention at all \Aggs{} and \x{Main}, we tried $m=\sqrt{p}$ (where $p$ is a known upper bound on the number of active threads), which yields $\sqrt{p}$ maximum contention at all locations.
We also tested with constant values of $m$ for all thread counts $p$, which ensures the maximum contention on the \x{Main} variable is bounded by the constant $m$, while \Aggs\ have maximum contention $p/m$.

\Cref{fig:child_thr_90} and \Cref{fig:child_thr_50}
compare the results for workloads with 90\% and 50\% \FetchAndAdd{}, respectively.
Regardless of the number of \Aggs{}, our algorithm outperforms the hardware \FAA\ from around 20 threads, and the best performing models (i.e. $m=4,\ 6$) are more than 3 times faster than the hardware \FAA\ at 176 threads.

\Cref{fig:child_batch_90} shows that schemes with fewer \Aggs\ have larger batches.
This matches our intuition, since schemes with fewer \Aggs\ have more threads contending on each, and so more threads apply \FAA\ to the \Agg's \x{value} before the delegate thread creates a batch.
While having larger batches means more operations are applied with a single \FAA\ instruction on \x{Main},
having more threads in each \Agg\ slows down the delegate's read (line \ref{readValue}), which proportionally reduces the rate of batch creation.
%Overall, the total throughput was similar across different schemes.

% \Hoon{a.last is less contended it's only single writer}

In contrast to the similar throughput of different schemes in \Cref{fig:child_thr_90} for the 90\% \FetchAndAdd{} workload, \Cref{fig:child_thr_50} shows varying $m$ produced different throughputs for the 50\% \FetchAndAdd{} workload.
All \op{Reads}  access \x{Main}, so read-heavy workloads perform better when \x{Main} has less contention, and schemes with fewer \Aggs{} perform better.

We chose $m=6$ as the default for the rest of the experiments we present, as it outperforms other choices in update-heavy and queue benchmarks in later sections, while performing sufficiently well in other workloads.

\subsection{Fetch-and-Add Benchmark}
\label{sec:main_benchmark}
% \Hao{Comparison with previous work}

\begin{figure*}
    \begin{minipage}{\textwidth}
    \includegraphics[width=0.8\textwidth]{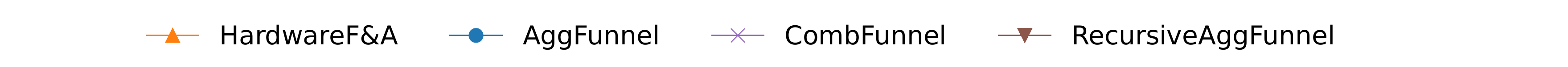}
    \end{minipage}
    \\
    \begin{minipage}{0.32\textwidth}
    \includegraphics[width=\textwidth]{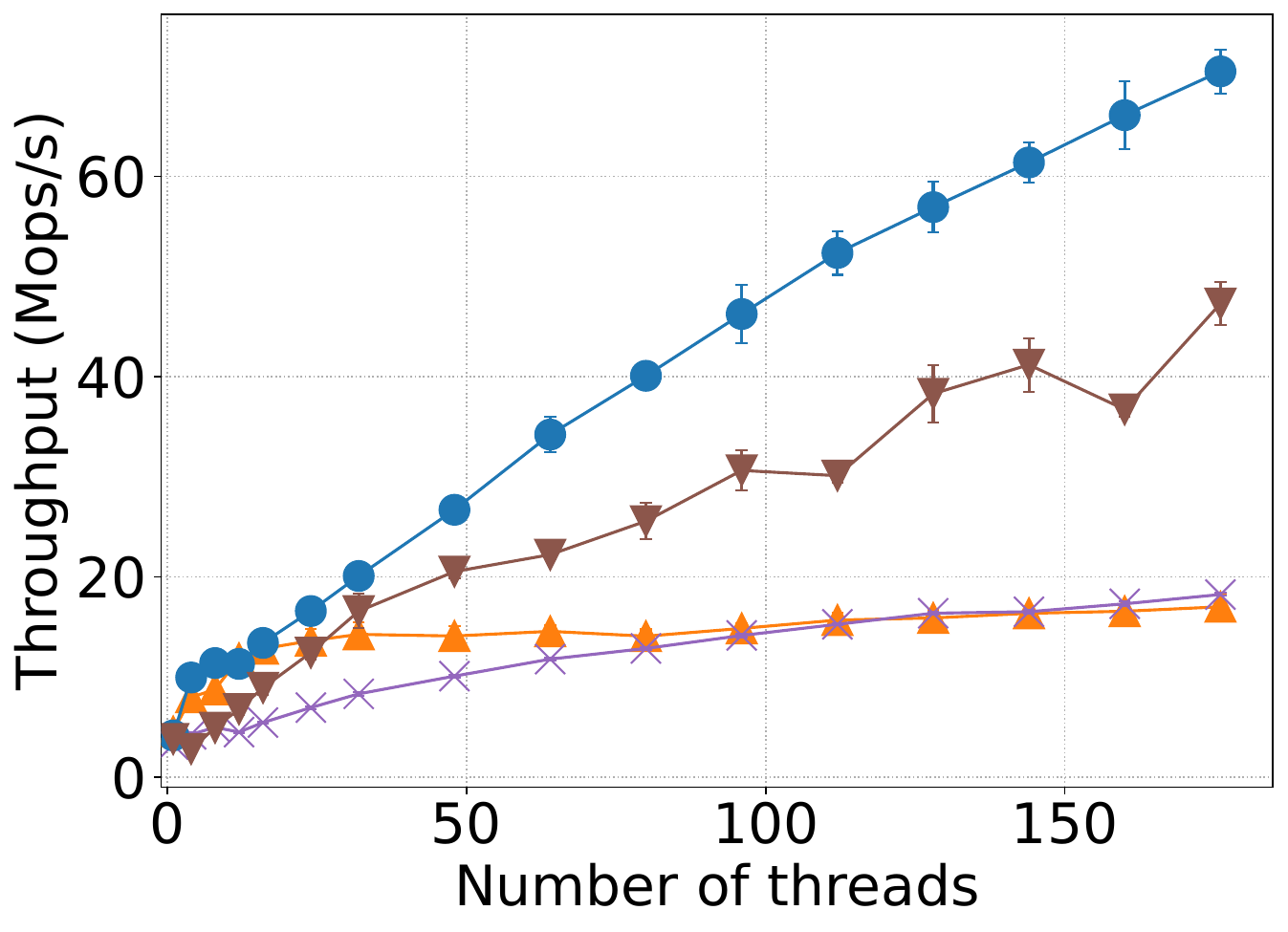}
    \subcaption{90\% \FetchAndAdd{}, 512 cycles, throughput}
    \label{fig:count_thr}
    \end{minipage}
    \begin{minipage}{0.32\textwidth}
    \includegraphics[width=\textwidth]{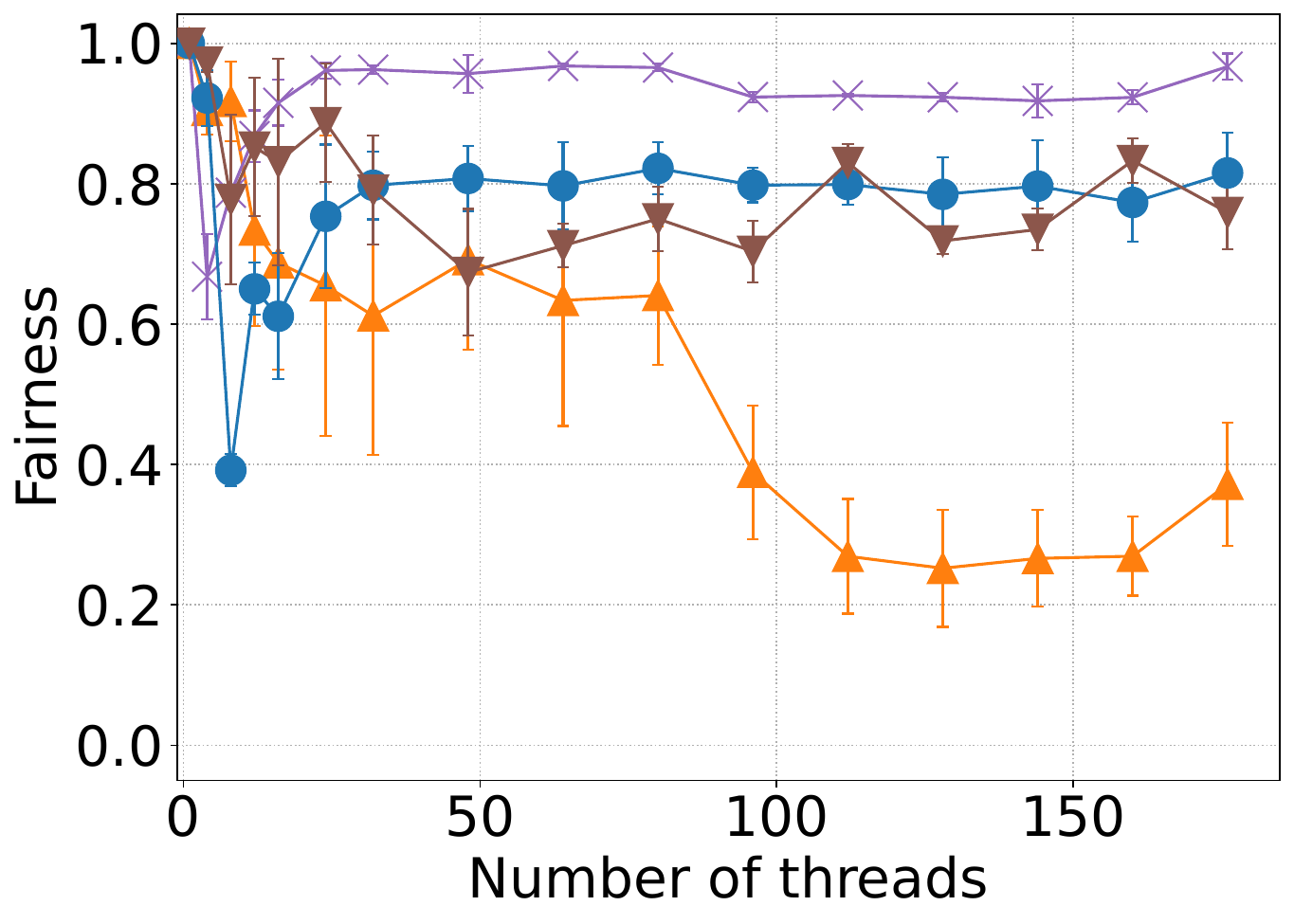}
    \subcaption{90\% \FetchAndAdd{}, 512 cycles, fairness}
    \label{fig:count_fair}
    \end{minipage}
    \begin{minipage}{0.32\textwidth}    
    \includegraphics[width=\textwidth]{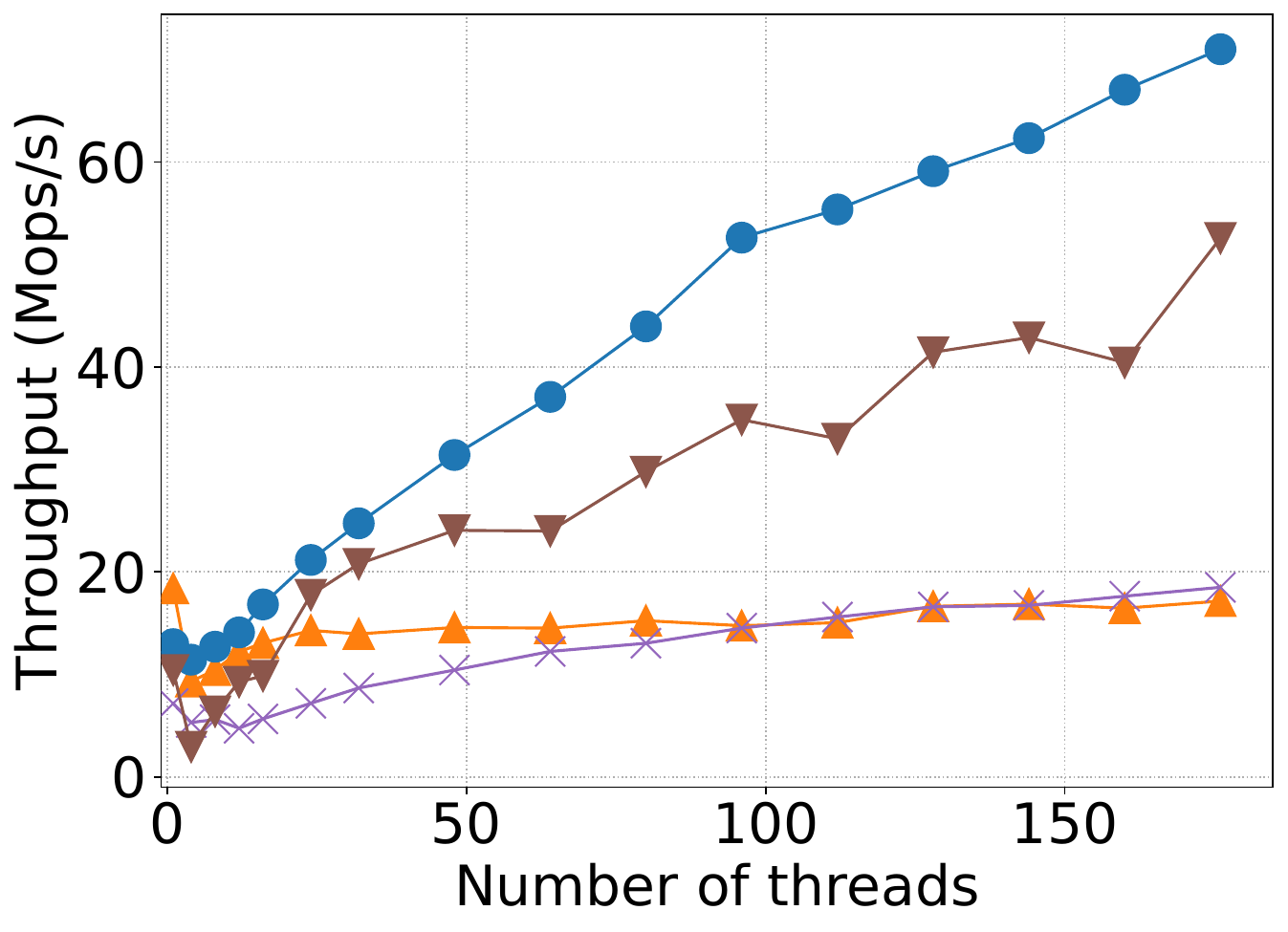}    
    \subcaption{90\% \FetchAndAdd{}, 32 cycles, throughput}
    \label{fig:count_lesswork}
    \end{minipage}
    \\
    \vspace{6pt}
    \begin{minipage}{0.32\textwidth}
    \includegraphics[width=\textwidth]{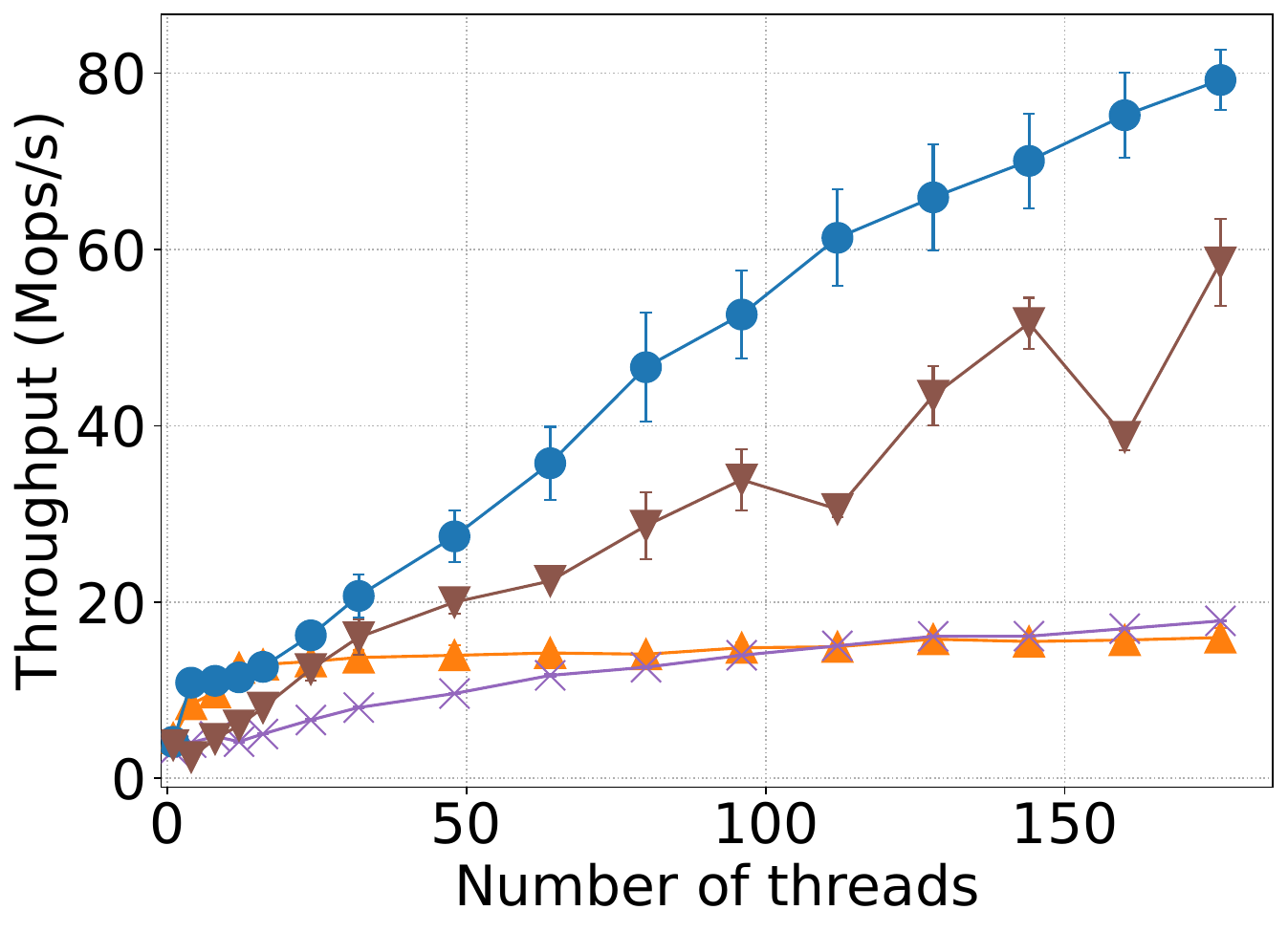}
    \subcaption{100\% \FetchAndAdd{}, 512 cycles, throughput}
    \label{fig:count_100}
    \end{minipage}
    \begin{minipage}{0.32\textwidth}
    \includegraphics[width=\textwidth]{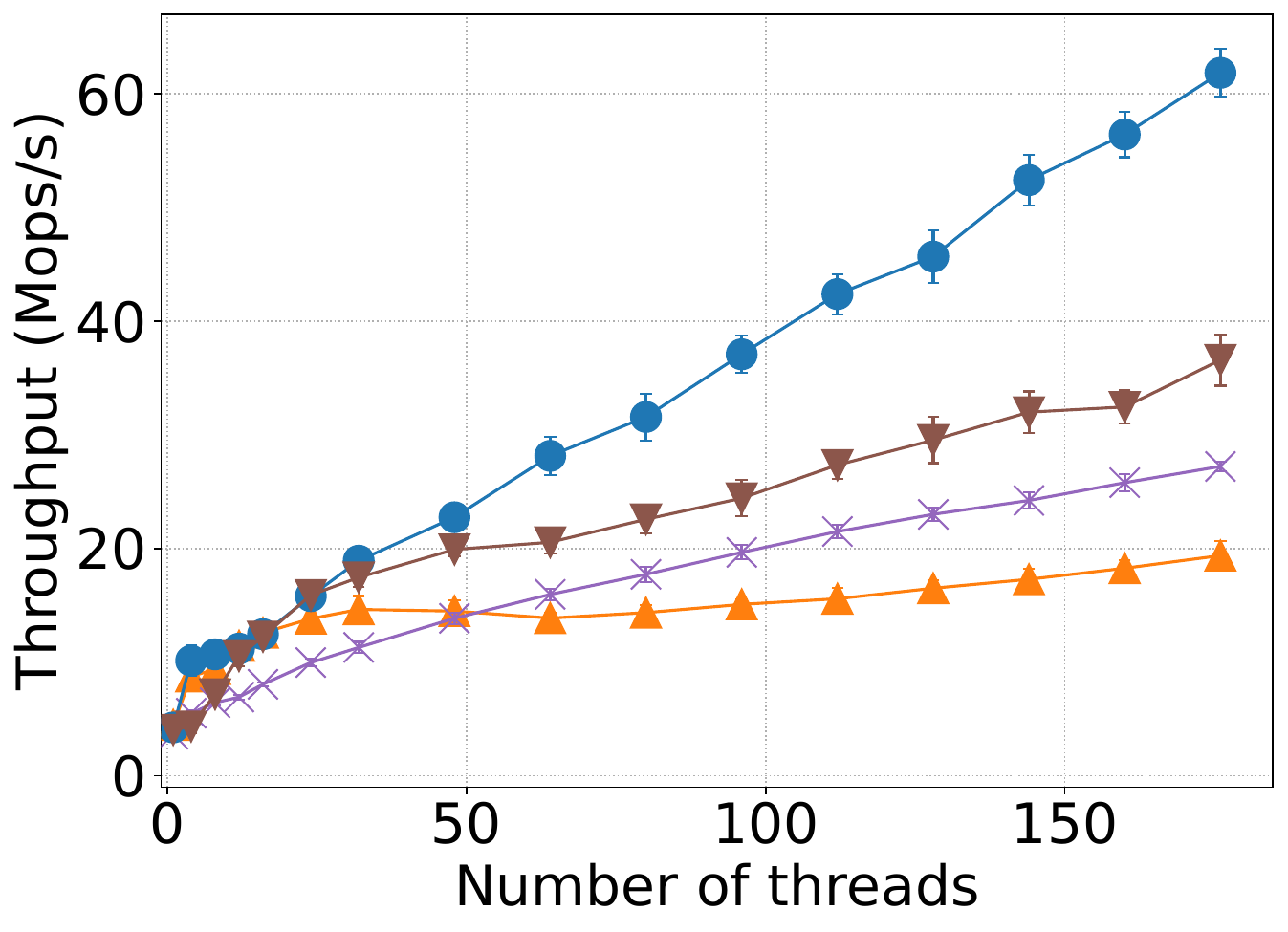}
    \subcaption{50\% \FetchAndAdd{}, 512 cycles, throughput}
    \label{fig:count_50}
    \end{minipage}
    \begin{minipage}{0.32\textwidth}
    \includegraphics[width=\textwidth]{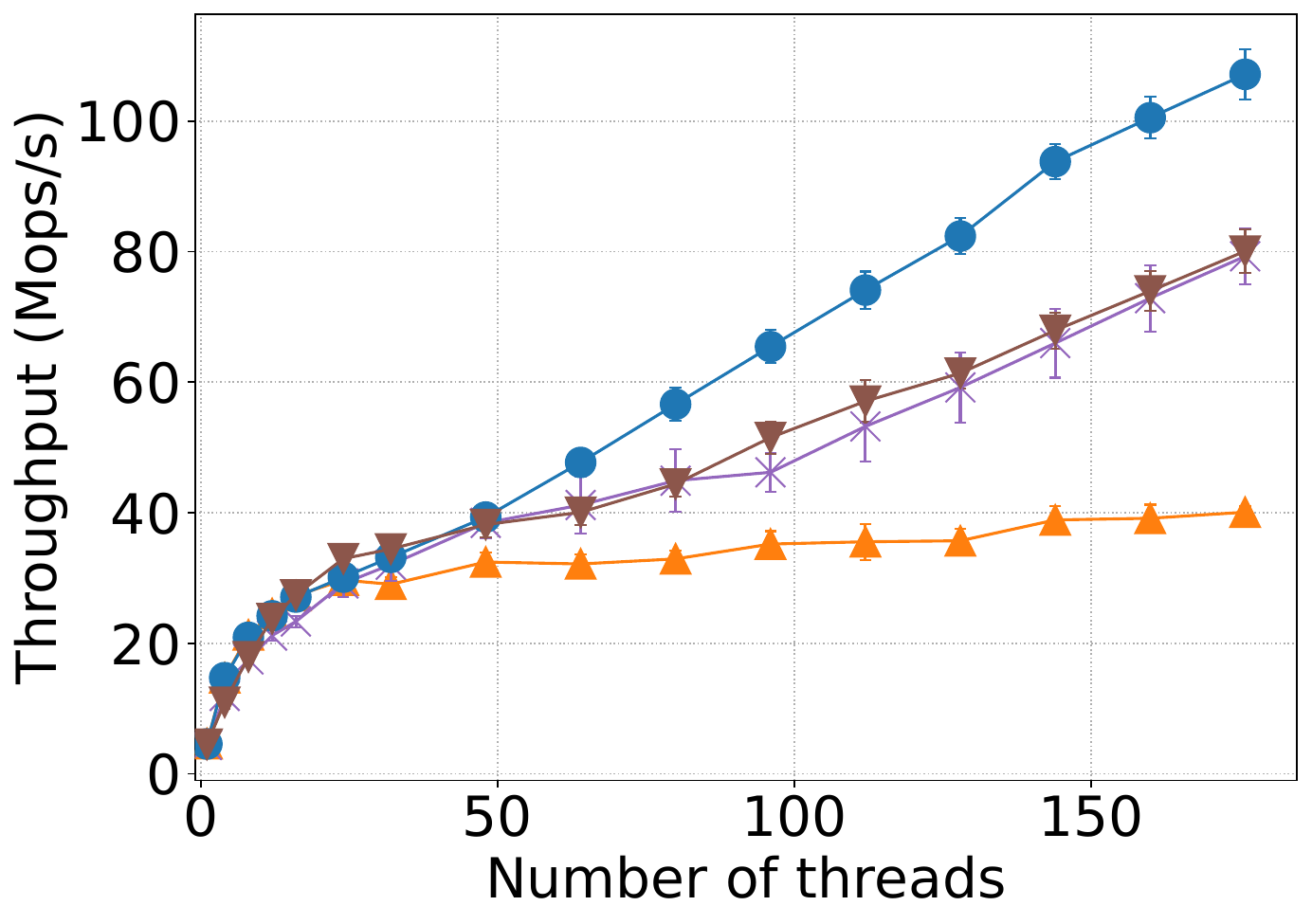}
    \subcaption{10\% \FetchAndAdd{}, 512 cycles, throughput}
    \label{fig:count_10}
    \end{minipage}
    \caption{Comparing throughput and fairness of Aggregating Funnels, Combining Funnels, and hardware \FAA. 
    }
    \label{fig:count_main}
\end{figure*}

In this section, we compare the performance of our algorithm with Combining Funnels \cite{SZ00} and hardware \FAA.
We tested the Combining Funnels by varying the depth and width of the funnel, and found that the best performing variant uses $\left\lceil \log(p) \right\rceil -1$ levels, halving the width at every level.
For Aggregating Funnels, we use 6 \Aggs\ and distribute threads evenly as mentioned above.
For recursive Aggregating Funnels (described in \Cref{recursive-sec}), we use the best performing variant which uses $m = \lceil p/6 \rceil$ \Aggs\ for the fetch-and-add object $O$,
and replaces the \x{Main} variable of $O$ by another instance of our algorithm with $m' = 6$ \Aggs, with threads distributed evenly.

\Cref{fig:count_main} shows Aggregating Funnels are faster than Combining Funnels in all cases, and outperform hardware \FAA\ after 30 threads.
Aggregating Funnels scale the best in all experiments, and Aggregating Funnels are up to 4x faster than both Combining Funnels and hardware \FAA\ for high thread counts.

For low thread counts, Combining Funnels have lower throughput than other algorithms, but they scale better than hardware \FAA\ and slightly outperform hardware \FAA\ with more threads in \Cref{fig:count_thr}.
Recursive Aggregating Funnels are expected 
% \Hao{We should not say "is expected to"} 
to scale better than single-level Aggregating Funnels as $p$ gets very large since it reduces contention further, but it did not achieve better throughput when testing it with up to 176 threads.
% \Hoon{Updated:}
As discussed in \Cref{allocation-scheme}, having fewer writing threads on the \x{Main} variable is advantageous in read-heavy workloads. This effect can be seen by comparing \Cref{fig:count_50} and \Cref{fig:count_10}. Since recursive Aggregating Funnels and Combining Funnels have fewer writing threads on the \x{Main} variable than the Aggregating Funnels, their throughput increases more as the workload has more read operations.

Varying the additional work did not significantly affect the throughput curve.
Comparing \Cref{fig:count_thr} and \Cref{fig:count_lesswork}, we see that only results with fewer than 8 threads were affected, and differences are negligible for higher thread counts.

% \Hoon{Updated:}
Aggregating Funnels have higher fairness compared to hardware \FAA\ for 32 or more threads, as shown in \Cref{fig:count_fair}.
Previous work suggests the reason hardware \FAA\ becomes unfair at high contention is that some threads benefit from getting exclusive access to the variable's cache line for longer~\cite{BSB19}.
Aggregating Funnels, however, mitigate this unfairness with three changes.
In both \Agg{}'s \x{value} and \x{Main} variable, the maximum number of contending threads is smaller. This allows each cache line to be used more fairly across contending threads.
% Also, it is less likely for one thread to continuously access the \x{Main} variable, since a delegate thread is the first thread to access the \Agg{}'s \x{value}, which should be approximately uniformly random.
Furthermore, a delegate thread  with fast \FAA\ access to \x{Main} also benefits the other threads in the same \Agg{}.
Notably, Combining Funnels have high fairness, due to the wider and deeper funnel configuration, and assigning random locations for each operation. 

% GCP c3d-standard-360 (AMD EPYC 9B14)
We also ran the same experiments in \Cref{fig:count_main} on AMD EPYC 9B14 processors, as well as 1st, 3rd and 5th Gen Intel Xeon processors.
Hardware \FAA\  performed differently on the different processors. 
In contrast, Aggregating Funnels scaled similarly in all machines and workloads we tested.
On our primary machine (with 4th Gen Intel Xeon processors), hardware \FAA\  stopped scaling after 30 threads, plateauing around 18Mops/s
(\Cref{fig:count_thr}).
On the newer 5th Gen Intel Xeon processor, hardware \FAA\  plateaued at around 20Mop/s.
% reached a throughput of around 1Mops/s and plateaued after 30 threads.
In older Intel machines, hardware \FAA\  scaled better than our primary machine, plateauing around 30Mops/s.
In the AMD machine, hardware \FAA\  scaled well on one socket but its throughput sharply dropped when moving to 2 sockets, plateauing around 40Mops/s.
% achieving up to 2x throughput at max threads.
Across all the machines that we tested, Aggregating Funnels outperformed hardware \FAA\  at high thread counts.

\subsection{\op{Fetch\&AddDirect} for High-Priority Threads}
\label{sec:high-priority}

\begin{figure*}
    \begin{minipage}{\textwidth}
    \includegraphics[width=0.9\textwidth]{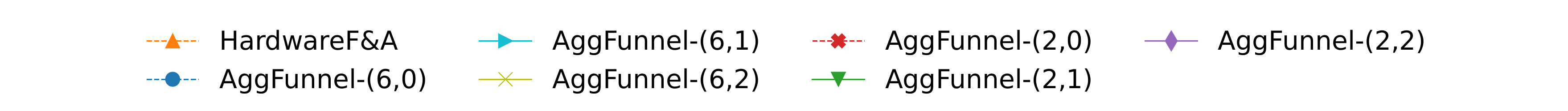}
    \end{minipage}
    \\
    \begin{minipage}{0.32\textwidth}
    \includegraphics[width=\textwidth]{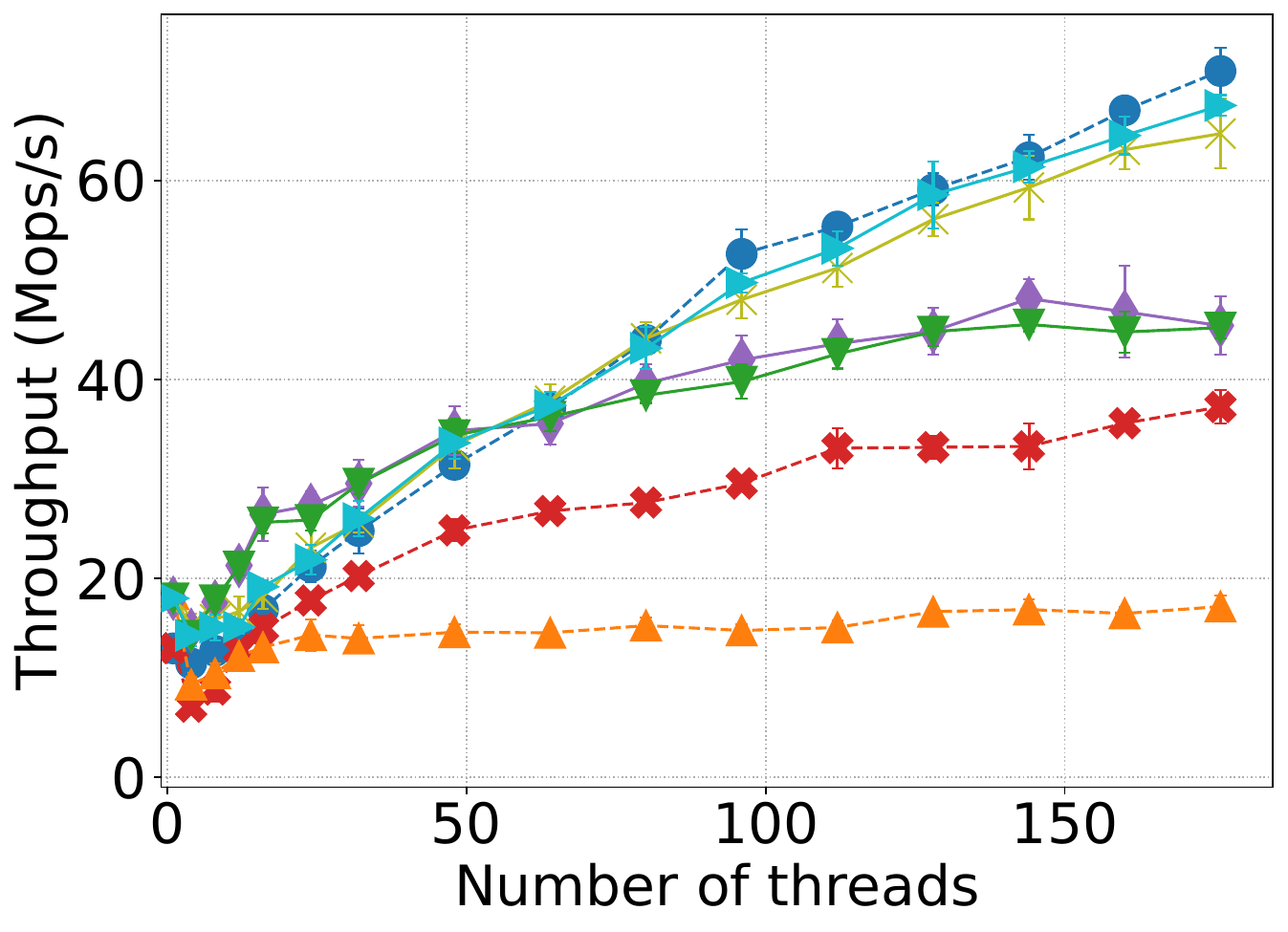}
    \subcaption{Throughput}
    \label{fig:skew_thr}
    \end{minipage}
    \begin{minipage}{0.32\textwidth}
    \includegraphics[width=\textwidth]{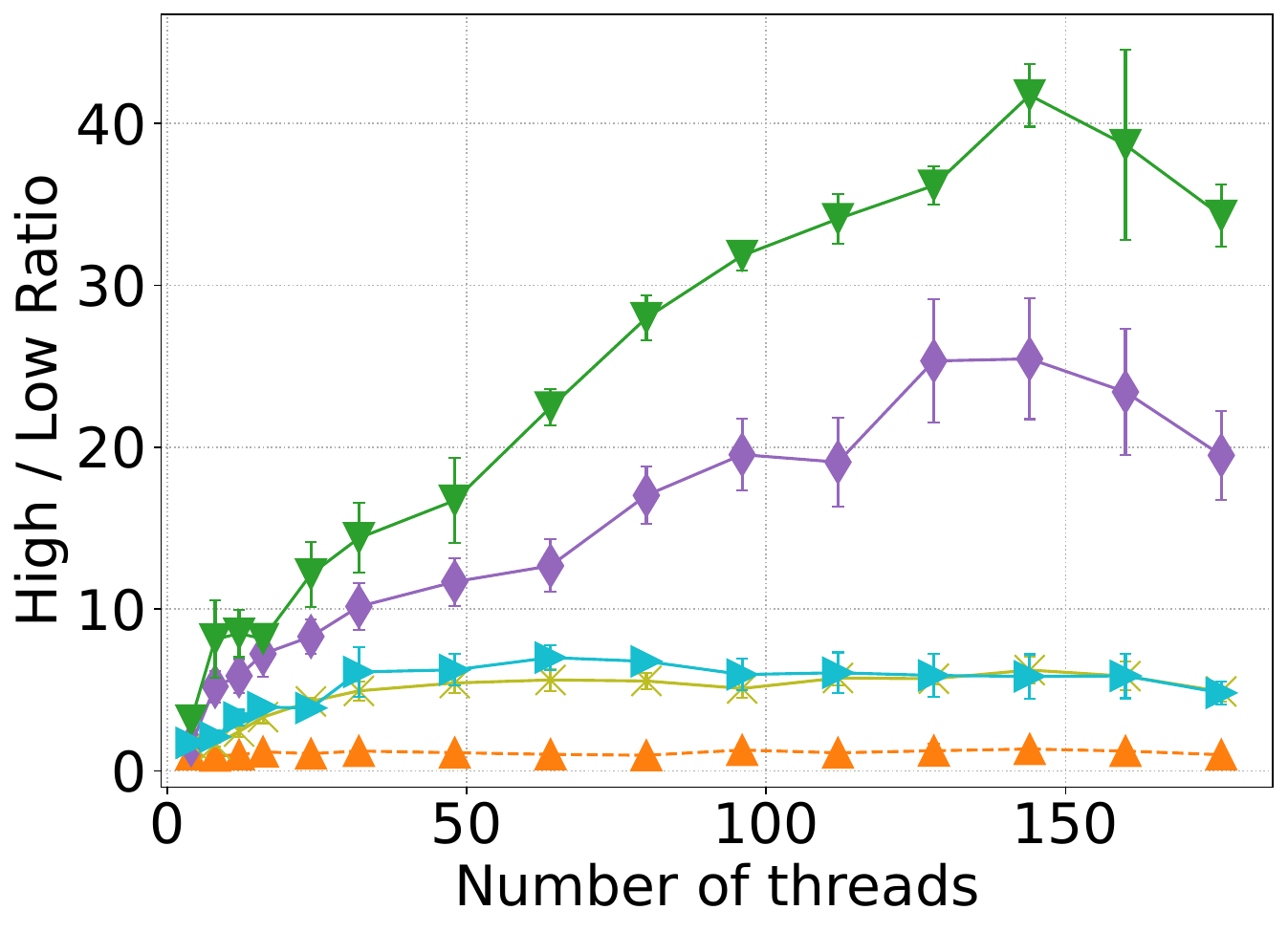}
    \subcaption{Average throughput ratio between high- and low-priority threads}
    \label{fig:skew_fair}
    \end{minipage}
    \begin{minipage}{0.32\textwidth}    
    \includegraphics[width=\textwidth]{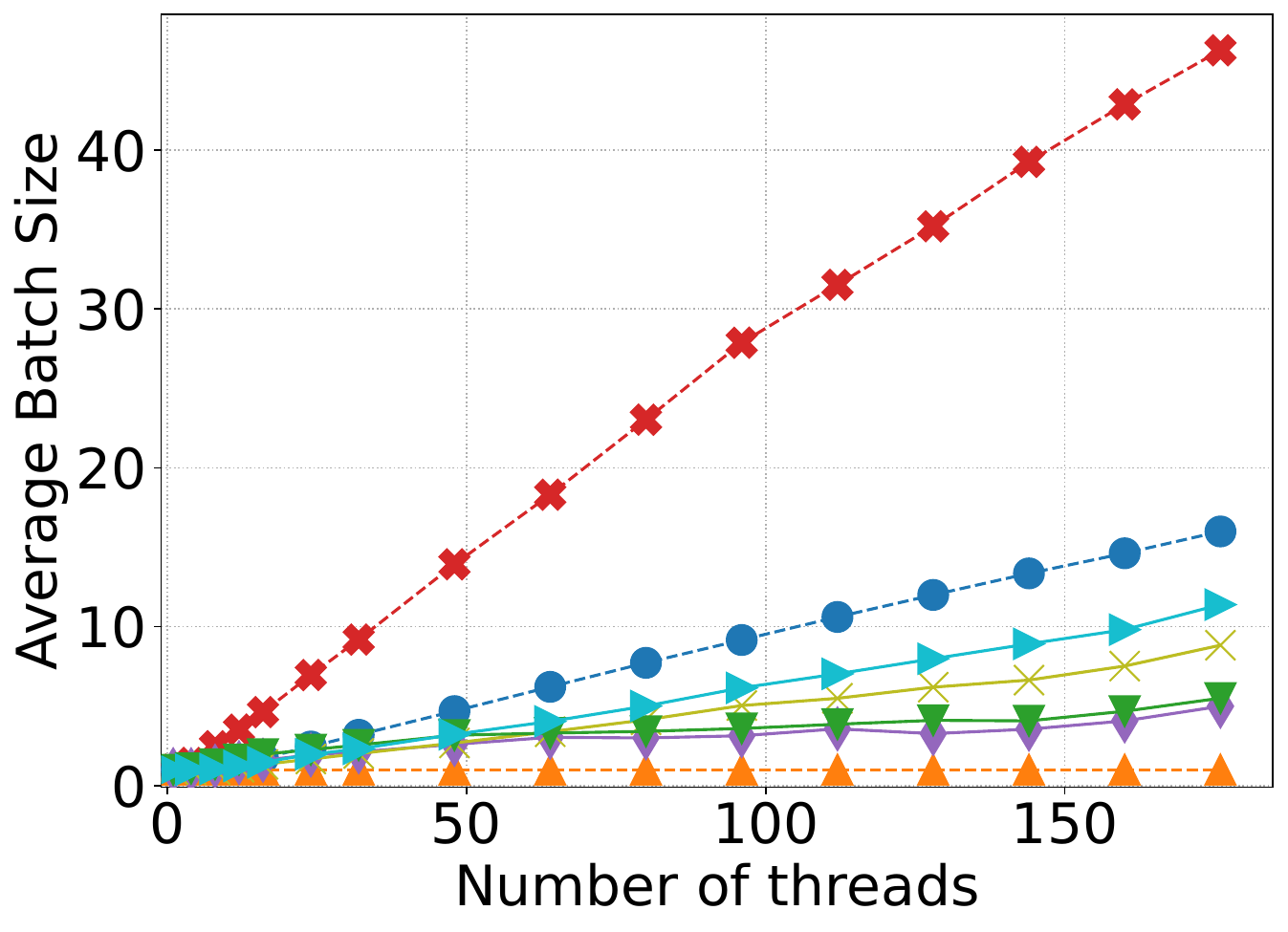}    
    \subcaption{Average batch size}
    \label{fig:skew_batch}
    \end{minipage}
    \caption{
    \FetchAndAdd\ performance with high-priority threads. 90\% \FetchAndAdd{}, 32 cycles of additional work.}
    \label{fig:skew_main}
\end{figure*}

\begin{figure*}
    \begin{minipage}{\textwidth}
    \includegraphics[width=0.9\textwidth]{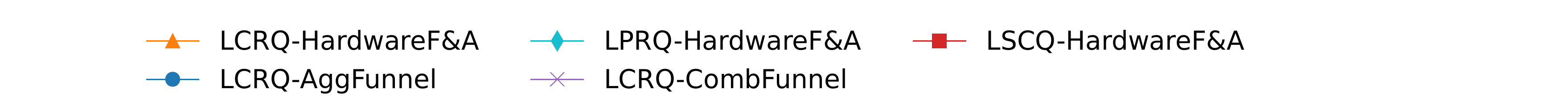}
    \end{minipage}
    \\
    \begin{minipage}{0.32\textwidth}
    \includegraphics[width=\textwidth]{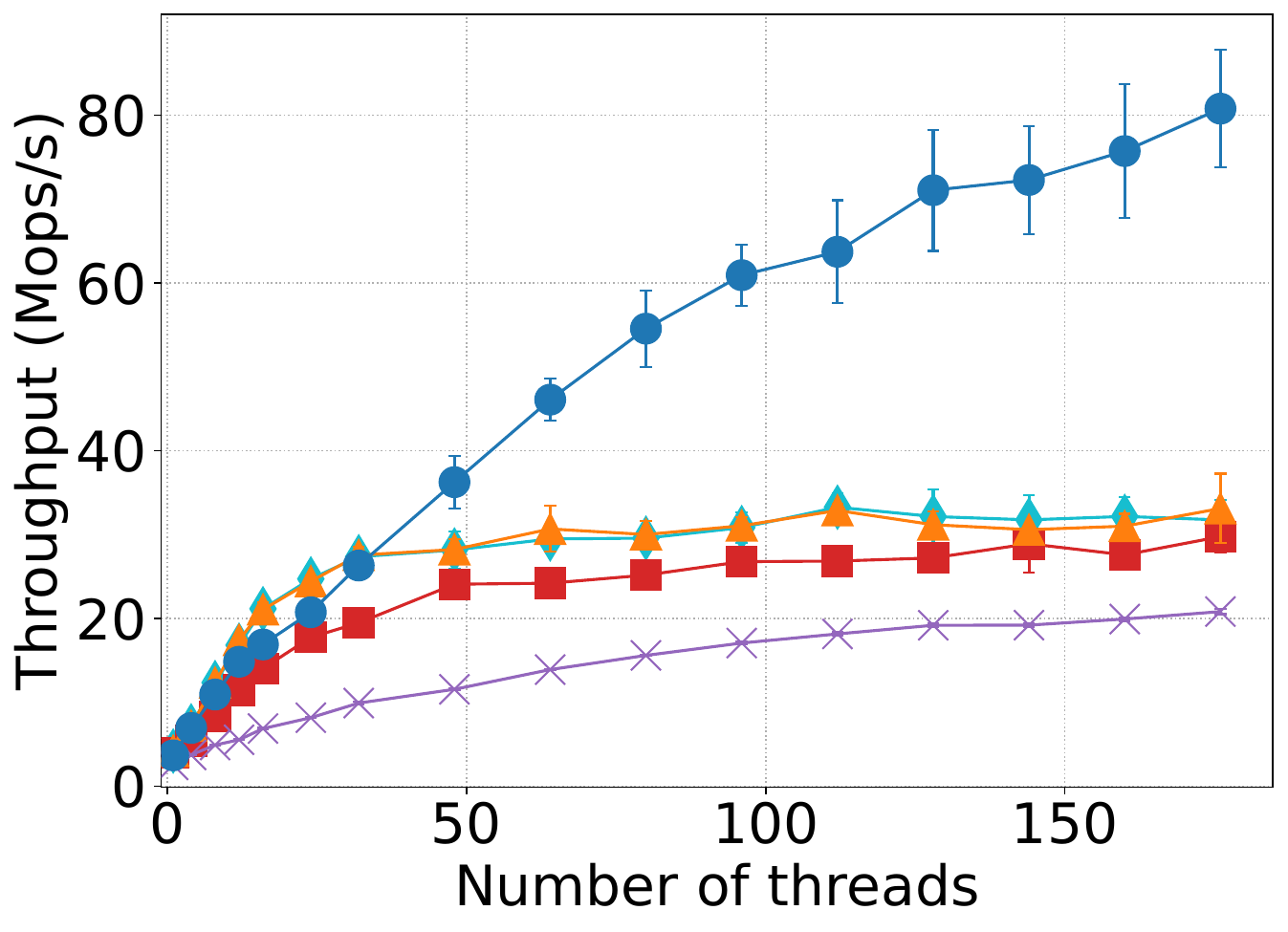}
    \subcaption{Pairwise enq-deq, empty initial queue}
    \label{fig:queue_default}
    \end{minipage}
    \begin{minipage}{0.32\textwidth}
    \includegraphics[width=\textwidth]{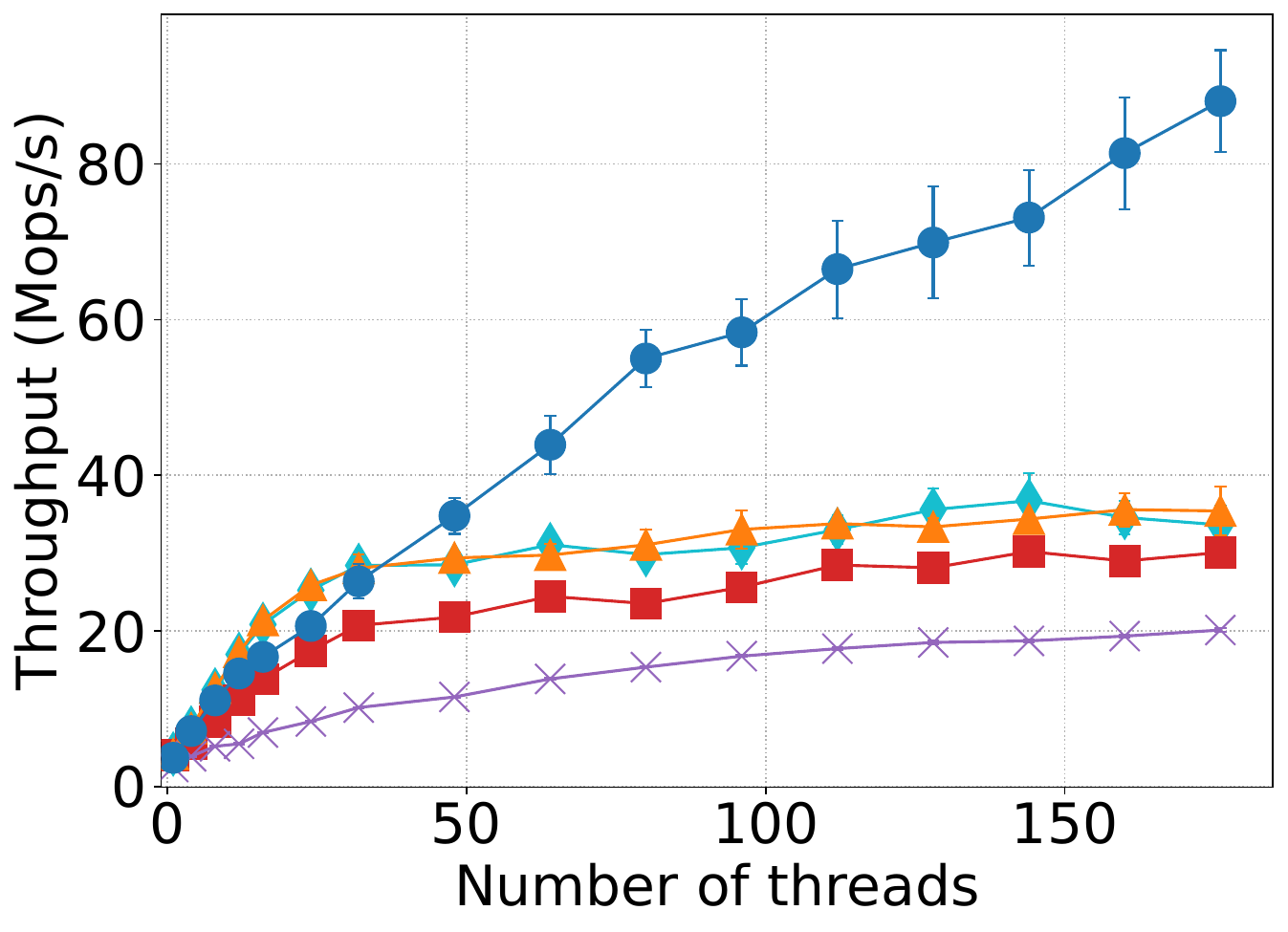}
    \subcaption{Pairwise 4enq-4deq, empty initial queue}
    \label{fig:queue_consec4}
    \end{minipage}
    \begin{minipage}{0.32\textwidth}    
    \includegraphics[width=\textwidth]{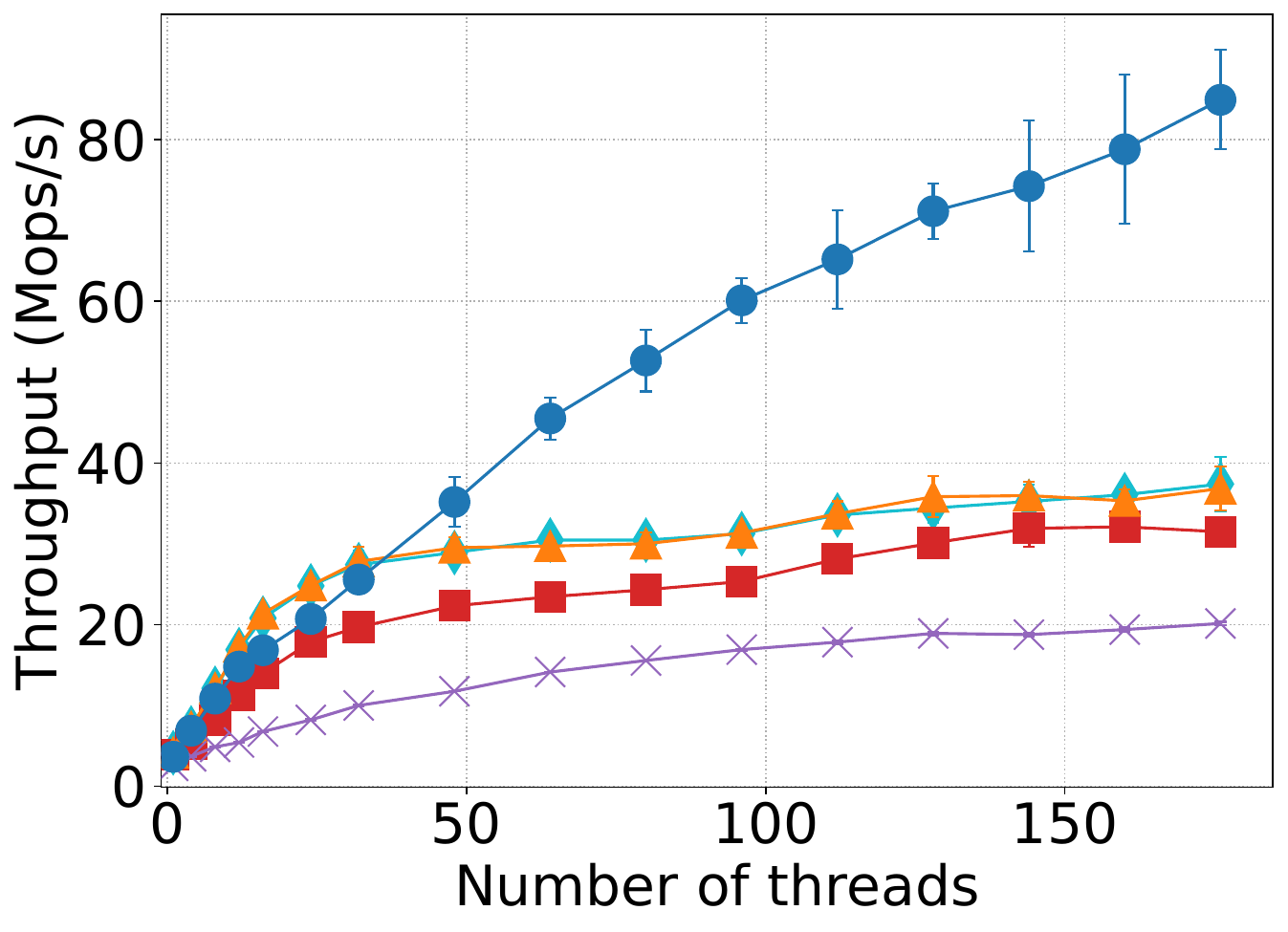}    
    \subcaption{Pairwise enq-deq, initial queue size 500}
    \label{fig:queue_lessinit}
    \end{minipage}
    \caption{Queue performance using different fetch-and-add implementations in LCRQ.}
    \label{fig:queue_main}
\end{figure*}

As mentioned in line \ref{direct} of \Cref{code-overflow}, our implementations support \op{Fetch\&AddDirect}, which performs a \FAA\ directly on \x{Main} and therefore has lower expected latency.
This characteristic can be utilized as an asset when different levels of priority are desired.
For example, a program may prioritize a specific thread's progress over different threads by calling \op{Fetch\&AddDirect}, when it is going through a critical section that stalls the other threads.
Any thread can decide when to use \op{Fetch\&AddDirect} at runtime.

In this section, we experiment with an asymmetric allocation scheme \op{AggFunnel}--$(m,d)$, where \x{d} threads are  \emph{high-priority threads} that call \op{Fetch\&AddDirect}, and the other $p-d$ low-priority threads will start from $m$ \x{Positive} \Aggs\ evenly, as explained in \Cref{allocation-scheme}.
For \Cref{fig:skew_main}, we ran schemes with $m=2,6$ and $d=0,1,2$, and only 32 cycles of additional work to highlight the findings.

\Cref{fig:skew_thr} shows throughput for different parameters.
With $m=6$, the total throughput was not significantly affected by having high-priority threads.
However, with $m=2$, the total throughput increased when high-priority threads were present.
This effect was more visible with less additional work.
We believe this is because high-priority threads can do consecutive \FetchAndAdd\ operations on \x{Main} variable, which can significantly decrease the number of cache loads.

% \Hoon{Updated:}
\Cref{fig:skew_fair} shows that average throughput of high-priority threads is up to 40x higher than that of low-priority threads, while the total throughput across all threads is higher than or similar to that of symmetric allocation scheme.
\Cref{fig:skew_batch} also confirms that high-priority threads write to the \x{Main} variable more often than low-priority threads, decreasing the average batch size. 
(One \op{Fetch\&AddDirect} operation counts as one batch.)
These results show that a few high-priority threads can be introduced to reduce latency for performance critical code without sacrificing overall throughput.

\subsection{Queue Benchmark}
\label{sec:queue-benchmark}

Since our \FetchAndAdd\ algorithm supports all hardware  primitives, we can easily replace a hardware \FAA\ object in various applications to mitigate the contention bottleneck.
As mentioned in \Cref{related-queue}, one significant application of \FetchAndAdd\ is in concurrent queues.
To confirm the usability of Aggregating Funnels, we ran a concurrent queue benchmark, with existing queues (LCRQ \cite{MA13}, LSCQ \cite{R19}, and LPRQ \cite{RK23}) with hardware \FAA{}, and LCRQ with Aggregating Funnels and Combining Funnels.

We modified the previously published artifact \cite{RK23} with our implementations of Aggregating Funnels.
We ran the benchmark with the existing docker configuration in the artifact, which uses \texttt{clang++-13} and \texttt{jemalloc}, with the \texttt{numactl -i all} command to distribute memory evenly across the sockets. 
Similar to the \FetchAndAdd\ benchmarks, we added an average of 512 cycles of work between successive enqueues and dequeues by the same thread.
\Cref{fig:queue_main} shows total throughput, which is double the transfer rate reported in~\cite{RK23}.

\Cref{fig:queue_main} illustrates that simply replacing hardware \FAA\ with the more scalable Aggregating Funnels \FetchAndAdd\ achieves much higher throughput.
In all three scenarios shown in the figure,
LCRQ with Aggregating Funnels has up to 2.5x higher throughput than LCRQ with hardware \FAA, and more than 3.5x higher throguhput than LCRQ with Combining Funnels for high thread counts.
        \section{Conclusion and Future Work}
\label{sec:conclusion}

In this paper, we designed the Aggregating Funnel algorithm for fetch-and-adds.
Our microbenchmarks show that Aggregating Funnels are very effective at dissipating contention: outperforming hardware fetch-and-add and the state-of-the-art Combining Funnels algorithm over a variety of workloads.
We demonstrated that the speed-ups observed in the microbenchmarks translate to higher-level applications by deploying our Aggregating Funnels in LCRQ.  
Replacing the hardware fetch-and-add objects with our Aggregating Funnels yields a significant (up to $2.5$x) speed-up in the performance of this state-of-the-art concurrent queue.

This work opens up many interesting avenues for future exploration, including:
(1) {\em Adapting the algorithm to new settings.}
For example, exploring non-blocking variants, NUMA-awareness, direct implementation in hardware, adaptive assignment of processes to \Aggs, and incorporating elimination \cite{ST97} to speed-up the common cases where increments and decrements are only by one. 
(2) {\em Deploying Aggregating Funnels in fetch-and-add applications beyond LCRQ.}
For example, the camera object in~\cite{WBBFR021}, the sequence number mechanism in~\cite{FPR19},
improving the performance of timestamping in software transactional memory algorithms such as TL-II~\cite{DSS06}, 
and more generally, in concurrent timestamping in database transactions and other database applications~\cite{S84}.

\begin{acks}
	We thank the anonymous reviewers of the paper and artifact for their comments. 
   This work was supported by 
   the MIT Undergraduate Research Opportunities Program and Ralph L.\ Evans (1948) Endowment Fund;
   National Science Foundation grants CCF-1845763, CCF-2316235, and CCF-2403237;
   a Google Faculty Research Award and Research Scholar Award;  
   the Natural Sciences and Engineering Research Council of Canada;
   the Hellenic Foundation for Research and Innovation under the Second Call for Research Projects to support Faculty Members and Researchers (Project: PERSIST, number: 3684); and the Greek Ministry of Education, Religious Affairs and Sports call SUB 1.1 -- Research Excellence Partnerships (Project: HARSH, code: $\Upsilon\Pi$ 3TA-0560901)
\end{acks}

 \newpage
	\bibliographystyle{ACM-Reference-Format}
	\bibliography{strings,biblio}
\appendix
%\newpage
\ % The empty page
\section{Artifact Evaluation Appendix}
\label{sec:ae}

\subsection{Abstract}

This artifact contains the source code and scripts to reproduce all the graphs in Section \ref{sec:experiments}.
For an up-to-date version of the Aggregating Funnels library, please visit our repository on GitHub: \url{https://github.com/Diuven/aggregating-funnels/tree/artifact-submission}.
%It includes an implementation of the \flock{} library described in Section~\ref{sec:imp} as well as all the data structures described in Section~\ref{sec:ds}.

\subsection{Artifact check-list (meta-information)}

\begin{itemize}
	\item \textbf{Algorithm:} The Aggregating Funnels and recursive Aggregating Funnels algorithm described in \Cref{sec:algorithms}.
	\item \textbf{Program:} microbenchmarks
	\item \textbf{Compilation:} \texttt{g++13}, \texttt{clang-13}
	\item \textbf{Run-time environment:} Ubuntu 24.04 LTS
	\item \textbf{Hardware:} Multi-core machine, preferably with Intel 4th Gen Xeon or newer processor with at least 64 logical cores
	\item \textbf{Output:} Graphs from Section \ref{sec:algorithms} as png files.
	\item \textbf{Experiments workflow:} One script for compiling, running, and generating graphs for \FAA\ benchmarks, and one script for compiling, running, and generating graphs queue benchmarks.
    the experiments and one script for generating all the graphs.
	\item \textbf{Disk space required (approximately):} 8 GB
	\item \textbf{Time needed to prepare workflow:} approximately 15 minutes
	\item \textbf{Time needed to complete experiments:} approximately 6 hours
	\item \textbf{Publicly available:} yes
	\item \textbf{Code licenses:} MIT License
\end{itemize}

\subsection{Description}

\subsubsection{How delivered}

The artifact is available on Zenodo \url{https://zenodo.org/records/14602039}.

\subsubsection{Hardware dependencies}

To accurately reproduce our experimental results, a multi-core machine with Intel 4th Gen Xeon or newer processor with at least 64 logical cores is recommended.

\subsubsection{Software dependencies}

Our artifact is expected to run correctly under a variety of Linux x86\_64 distributions.
\texttt{numactl} is needed to evenly distribute the memory allocations across multiple sockets.
All other dependencies are included in the docker configuration, therefore only docker runtime supporting x84\_64 Ubuntu 24.04 is required.

\subsubsection{Data sets}

None.

\subsection{Installation}

For the detailed and updated instruction, please refer to the README file of our Github repository. \url{https://github.com/Diuven/aggregating-funnels/tree/artifact-submission}

\begin{enumerate}    
    \item Build the docker image (install docker if you haven't)
    \texttt{docker build --network=host --platform linux/amd64 -t aggfunnel .}
    
    \item Launch the docker container as an interactive shell. This command also complies all the necessary binaries. Remaining commands should be run inside the docker container.
    \texttt{docker run -v .:/home/ubuntu/project -it --privileged --network=host aggfunnel}
    Note: This command mounts the current directory aggregating-funnels/ into the docker container, so both are synchronized.
\end{enumerate}

\subsection{Experiment workflow}

After compiling, run \texttt{./scripts/run\_counter\_bench.sh \&\& ./scripts/run\_queue\_bench.sh} inside the docker to run and generate all the graphs.

\subsection{Evaluation and expected results}

On a machine with ~128 logical cores and with recently released processor, the throughput of Aggregating Funnels should be very similar to those reported in this paper. Note that hardware \FAA\ may perform differently on different processors, but the Aggregating Funnels should scale better than hardware \FAA\ at high threads, as discussed at the end of \Cref{sec:main_benchmark}.

\subsection{Experiment customization}

For instructions on how to customize the number of threads, workload, and the allocation scheme in each experiment, please see the README file included in the artifact.

\subsection{Notes}

None.

\subsection{Methodology}
Submission, reviewing and badging methodology:
\begin{itemize}
	\item \url{https://ctuning.org/ae/submission-20190109.html}
	\item \url{https://ctuning.org/ae/reviewing-20190109.html}
	\item \url{https://www.acm.org/publications/policies/artifact-review-badging}
\end{itemize}
%\newpage
\end{document}